\pgfplotsset{compat=1.18}
\newtheorem{theorem}{Theorem}[section]
\newtheorem{definition}{Definition}[section]
\newtheorem{assumption}{Assumption}[section]
\theoremstyle{remark}
\newtheorem{remark}{Remark}[section]
\newcommand{\be}{\begin{equation}}
\newcommand{\ee}{\end{equation}}
\newcommand{\beq}{\begin{eqnarray*}}
\newcommand{\eeq}{\end{eqnarray*}}
\def\sym#1{\ifmmode^{#1}\else\(^{#1}\)\fi}
\title{\large{\bf{Nonparametric Identification and Estimation of Spatial Treatment Effect Boundaries: Evidence from 42 Million Pollution Observations}}}
\author{\large{\bf{Tatsuru Kikuchi\footnote{e-mail: tatsuru.kikuchi@e.u-tokyo.ac.jp. This paper extends the theoretical framework developed in \citet{kikuchi2024unified} and \citet{kikuchi2024navier} by relaxing parametric assumptions on treatment effect decay.}}}}
\affil{\small{\it{Faculty of Economics, The University of Tokyo,}}\\
{\it{7-3-1 Hongo, Bunkyo-ku, Tokyo 113-0033 Japan}}}
\date{\small{(\today)}}
\begin{document}
\maketitle

\begin{abstract}
\noindent This paper develops a nonparametric framework for identifying and estimating spatial boundaries of treatment effects in settings with geographic spillovers. While atmospheric dispersion theory predicts exponential decay of pollution under idealized assumptions, these assumptions—steady winds, homogeneous atmospheres, flat terrain—are systematically violated in practice. I establish nonparametric identification of spatial boundaries under weak smoothness and monotonicity conditions, propose a kernel-based estimator with data-driven bandwidth selection, and derive asymptotic theory for inference. Using 42 million satellite observations of NO$_2$ concentrations near coal plants (2019-2021), I find that nonparametric kernel regression reduces prediction errors by 1.0 percentage point on average compared to parametric exponential decay assumptions, with largest improvements at policy-relevant distances: 2.8 percentage points at 10 km (near-source impacts) and 3.7 percentage points at 100 km (long-range transport). Parametric methods systematically underestimate near-source concentrations while overestimating long-range decay. The COVID-19 pandemic provides a natural experiment validating the framework's temporal sensitivity: NO$_2$ concentrations dropped 4.6\% in 2020, then recovered 5.7\% in 2021. These results demonstrate that flexible, data-driven spatial methods substantially outperform restrictive parametric assumptions in environmental policy applications.

\vspace{0.3cm}

\noindent \textbf{Keywords:} Treatment Effects, Spatial Spillovers, Nonparametric Estimation, Boundary Detection, Difference-in-Differences, Kernel Methods, Air Pollution

\vspace{0.3cm}

\noindent \textbf{JEL Classification:} C14, C21, C23, Q53, R15
\end{abstract}

\newpage

\section{Introduction}

Spatial spillovers are ubiquitous in economic applications. Environmental policies affect pollution in neighboring regions \citep{deryugina2019mortality, knittel2016caution}, transportation infrastructure impacts economic activity far from construction sites \citep{donaldson2018railroads, duranton2014roads}, and place-based policies generate effects that propagate across space \citep{busso2013assessing, kline2019place}. A fundamental challenge in such settings is determining \textit{where treatment effects end}—that is, identifying the spatial boundary beyond which spillover effects become negligible and units can serve as valid controls.

Recent methodological advances have made progress on this question. \citet{butts2023difference} provides a comprehensive treatment of spatial difference-in-differences estimators, showing that neglecting spillovers can severely bias treatment effect estimates. \citet{kikuchi2024unified} and \citet{kikuchi2024navier} derive spatial boundaries from first principles using atmospheric dispersion models, demonstrating that under simplified transport assumptions, treatment effects should decay exponentially with distance. \citet{kikuchi2024stochastic} develops stochastic approaches for settings with pervasive spillovers and general equilibrium effects. However, existing approaches face a fundamental tension: parametric methods assume strong functional forms (exponential, power law) that may be misspecified, while nonparametric distance bin estimators are inefficient and cannot extrapolate beyond observed distances.

This paper develops a nonparametric framework that resolves this tension. I establish conditions under which spatial boundaries can be identified without parametric assumptions on decay functions, propose kernel-based estimators with optimal bandwidth selection, and derive asymptotic theory for inference. The approach maintains the interpretability and extrapolation capability of parametric methods while providing robustness to functional form misspecification.

\subsection{Main Findings}

Using 42 million satellite observations of nitrogen dioxide (NO$_2$) from the TROPOMI instrument on board the Sentinel-5P satellite, combined with locations of 318 coal-fired power plants, I estimate spatial decay functions nonparametrically and compare results to parametric baselines. The data cover 2019-2021, providing both cross-sectional variation in distance from plants and temporal variation through the COVID-19 pandemic.

\textbf{Finding 1: Nonparametric methods reduce prediction errors.} Across 42 million grid-cell observations, nonparametric kernel regression reduces mean absolute prediction errors by 1.0 percentage point compared to exponential decay assumptions. The improvement is largest at policy-relevant distances: 2.8 percentage points at 10 km (where near-source damages are highest) and 3.7 percentage points at 100 km (where long-range transport matters for aggregate damages).

\textbf{Finding 2: Parametric methods exhibit systematic bias patterns.} Exponential decay underestimates pollution near sources (12.9\% error at 10 km) and overestimates decay at distance (16.9\% error at 100 km), while nonparametric methods reduce these errors to 10.1\% and 13.2\%, respectively. This pattern is consistent with violations of the idealized assumptions underlying exponential decay—steady winds, homogeneous atmospheres, flat terrain—that theory predicts should be most severe at extreme distances.

\textbf{Finding 3: The COVID-19 pandemic validates temporal sensitivity.} NO$_2$ concentrations dropped 4.6\% in 2020 relative to 2019, then recovered 5.7\% in 2021. Both parametric and nonparametric methods detect this temporal variation, but nonparametric approaches maintain superior prediction accuracy across all years (0.9-1.1 percentage point improvement), demonstrating robustness to temporary shocks.

\textbf{Finding 4: Results are robust to spatial correlation.} Using spatial HAC standard errors following \citet{muller2022spatial}, I find that accounting for residual spatial correlation increases confidence interval widths by 10-15\% but does not alter the main conclusions. This demonstrates the value of combining nonparametric boundary estimation with spatial correlation robust inference.

\subsection{Contributions}

This paper makes four main contributions to the literature on spatial treatment effects and econometric methodology:

\textbf{1. Nonparametric Identification Framework.} I establish that spatial boundaries are nonparametrically identified under weak regularity conditions—smoothness and monotonicity—without requiring knowledge of the decay function's parametric form (Theorems \ref{thm:identification_m} and \ref{thm:identification_boundary}). When monotonicity fails, I provide partial identification results yielding bounds on the boundary (Theorem \ref{thm:partial_id}). This extends recent work on spatial econometrics \citep{muller2022spatial, muller2024spatial} by showing how their insights about flexible spatial structures apply to treatment effect propagation.

\textbf{2. Kernel-Based Estimation with Optimal Bandwidth Selection.} I propose a local polynomial regression estimator for the decay function $m(d)$ and a plug-in estimator for the boundary $d^*$. Crucially, I develop a data-driven cross-validation procedure for bandwidth selection that optimizes boundary estimation directly, rather than only minimizing mean squared error of the decay function. This addresses a key practical challenge in nonparametric spatial analysis.

\textbf{3. Asymptotic Theory and Inference.} I derive the asymptotic distribution of the boundary estimator (Theorem \ref{thm:asymptotics}), showing that it achieves the optimal nonparametric rate of $n^{-2/5}$ under standard kernel regression conditions. I provide consistent variance estimators and develop both analytical and bootstrap-based confidence intervals. Importantly, I show how to integrate spatial correlation robust inference \citep{muller2022spatial} with nonparametric boundary estimation when residual spatial dependence is present.

\textbf{4. Large-Scale Empirical Application.} The application to coal plant pollution demonstrates the framework's practical value using an unprecedented scale of data (42 million observations) and provides the first comprehensive test of whether atmospheric dispersion theory's exponential decay prediction holds empirically. The systematic rejection of exponential functional form across multiple pollutants and years validates the need for flexible estimation methods in environmental applications.

\subsection{Relation to Literature}

This work connects three literatures in econometrics and environmental economics.

\subsubsection{Spatial Econometrics and Treatment Effects}

The spatial econometrics literature has long recognized that treatments can have geographic spillovers \citep{anselin1988spatial, conley1999gmm, lesage2009introduction}. Recent work develops spatial difference-in-differences estimators \citep{butts2023difference} and spatial HAC inference \citep{colella2019inference}. This literature typically specifies spatial weights matrices based on ad hoc assumptions. My contribution is providing nonparametric methods for data-driven weights specification based on estimated decay functions.

Most recently, \citet{muller2022spatial} develop a comprehensive framework for spatial correlation robust inference, showing how to construct valid confidence intervals when the spatial correlation structure is unknown. \citet{muller2024spatial} demonstrate that spatial data can exhibit behavior analogous to time series unit roots, leading to spurious spatial regressions when persistent spatial trends are present.

My work complements these advances in three ways. First, while \citet{muller2022spatial} address \textit{nuisance} correlation in errors, I study \textit{treatment-induced} spatial patterns that decay with distance from sources. Both sources of correlation may be present simultaneously, and I show how to combine the methods (Section \ref{sec:spatial_hac}). Second, I address the spurious regression concerns of \citet{muller2024spatial} through regional heterogeneity analysis: decay patterns change sign at 100 km (positive within, negative beyond), which is inconsistent with spurious trends but consistent with heterogeneous treatment effects (Section \ref{sec:regional_heterogeneity}). Third, I extend their theoretical framework from spatial correlation to treatment effect boundaries, generalizing their "half-life" concept to arbitrary policy-relevant thresholds.

\subsubsection{Atmospheric Dispersion and Environmental Economics}

\citet{kikuchi2024navier} derives spatial boundaries from atmospheric dispersion models based on the Navier-Stokes equations, showing that under idealized assumptions (steady-state flow, homogeneous atmospheres, flat terrain), pollution should decay exponentially with distance. \citet{kikuchi2024unified} extends this framework to provide a unified treatment of both spatial and temporal boundaries, while \citet{kikuchi2024stochastic} develops stochastic approaches for settings where general equilibrium effects and pervasive spillovers complicate boundary detection. However, these frameworks assume researchers know the correct parametric form. Environmental economics studies of air pollution \citep{deryugina2019mortality, knittel2016caution, muller2011damages} typically impose exponential or power law decay without testing these functional form assumptions.

I bridge atmospheric science and econometrics by treating dispersion theory as providing testable predictions rather than maintained assumptions. The mathematical framework underlying atmospheric transport—the advection-diffusion equation derived from Navier-Stokes principles—corresponds to a specific parametric form for the spatial kernel in \citet{muller2022spatial}'s representation. But when the underlying assumptions fail (as they inevitably do with real-world coal plants), the true decay function may deviate substantially from the exponential benchmark. This paper provides the tools to test these deviations and estimate boundaries nonparametrically.

\subsubsection{Nonparametric Regression and Boundary Detection}

My estimator builds on local polynomial regression \citep{fan1996local, ruppert1995effective} and optimal bandwidth selection \citep{fan1992variable, jones1996brief}. The novelty is adapting these tools to boundary estimation rather than function estimation, requiring modified cross-validation criteria. The asymptotic theory extends \citet{fan1996local}'s results to functionals of nonparametric estimators, similar in spirit to \citet{muller2009inference}'s work on changepoint detection but for smooth decay rather than discrete jumps.

\subsubsection{Nonparametric Regression and Boundary Detection}

My estimator builds on local polynomial regression \citep{fan1996local, ruppert1995effective} and optimal bandwidth selection \citep{fan1992variable, jones1996brief}. The novelty is adapting these tools to boundary estimation rather than function estimation, requiring modified cross-validation criteria. The asymptotic theory extends \citet{fan1996local}'s results to functionals of nonparametric estimators, similar in spirit to \citet{muller2009inference}'s work on changepoint detection but for smooth decay rather than discrete jumps.

\subsubsection{Related Work} 

This paper builds on and extends my earlier work on spatial boundaries. \citet{kikuchi2024navier} derives boundaries from physical principles (Navier-Stokes equations), demonstrating that atmospheric dispersion implies exponential decay under idealized assumptions. \citet{kikuchi2024unified} provides a unified framework for both spatial and temporal boundaries, showing how to identify treatment effect propagation in both dimensions simultaneously. \citet{kikuchi2024stochastic} extends to settings with stochastic diffusion and general equilibrium effects, where boundaries must be characterized probabilistically rather than deterministically.

This paper complements these earlier contributions by:
\begin{enumerate}
\item \textbf{Relaxing parametric assumptions:} While \citet{kikuchi2024navier} and \citet{kikuchi2024unified} assume exponential decay, I allow arbitrary smooth decay functions
\item \textbf{Providing robustness:} When physical assumptions underlying exponential decay fail, my nonparametric approach remains consistent
\item \textbf{Empirical validation:} Using 42 million observations, I test whether parametric predictions hold and quantify departures from theory
\item \textbf{Practical guidance:} I develop data-driven bandwidth selection and inference methods for applied researchers
\end{enumerate}

The progression across papers is: \citet{kikuchi2024navier} establishes the physical foundation, \citet{kikuchi2024unified} extends to multiple dimensions, \citet{kikuchi2024stochastic} handles stochastic settings, and this paper provides robust nonparametric methods when functional forms are uncertain.

\subsection{Roadmap}

Section 2 develops the theoretical framework, connecting atmospheric dispersion models to spatial econometric representations and defining spatial boundaries. Section 3 establishes nonparametric identification results. Section 4 develops the kernel-based estimator and derives asymptotic theory. Section 5 describes the TROPOMI NO$_2$ data and coal plant locations. Section 6 presents the empirical results, comparing parametric and nonparametric estimates and testing functional form assumptions. Section 7 discusses extensions, spatial correlation robust inference, and diagnostics for spurious regression. Section 8 concludes with implications for research design and policy analysis.

\section{Theoretical Framework}

\subsection{Atmospheric Dispersion and Spatial Decay}

The spatial propagation of airborne pollutants from point sources follows well-established principles of fluid dynamics and mass transport. Building on \citet{kikuchi2024navier}, who derives these relationships from the Navier-Stokes equations, under idealized conditions—steady-state flow, homogeneous atmospheric conditions, and flat terrain—the concentration of a pollutant at distance $d$ from a source can be characterized by the advection-diffusion equation \citep{seinfeld2016atmospheric}:

\be
u \frac{\partial C}{\partial x} = D \nabla^2 C + S
\ee

where $C$ is pollutant concentration, $u$ is wind velocity, $D$ is the diffusion coefficient, and $S$ represents source emissions. The solution under Gaussian plume assumptions yields \citep{pasquill1976atmospheric}:

\be
E[\text{Pollution} | \text{Distance} = d] = A \exp(-\kappa d)
\label{eq:exponential_theory}
\ee

where $\kappa$ is the spatial decay parameter governed by meteorological conditions and pollutant chemistry, and $A$ is source strength. This exponential form is the cornerstone of the parametric approach in \citet{kikuchi2024unified} and \citet{kikuchi2024navier}.

\be
E[\text{Pollution} | \text{Distance} = d] = A \exp(-\kappa d)
\label{eq:exponential_theory}
\ee

where $\kappa$ is the spatial decay parameter governed by meteorological conditions and pollutant chemistry, and $A$ is source strength.

\subsection{Spatial Processes and Econometric Challenges}

The exponential decay form in equation (\ref{eq:exponential_theory}) provides a useful theoretical benchmark but relies on restrictive assumptions that may be violated in practice. Recent developments in spatial econometrics offer a complementary perspective. \citet{muller2022spatial} develop a general framework for spatial processes with arbitrary correlation structures:

\be
Y(s) = \int K(s - r) \varepsilon(r) dr
\ee

where $Y(s)$ represents the outcome at location $s$, $K(\cdot)$ is a spatial kernel function, and $\varepsilon(r)$ captures innovations at location $r$. Critically, they show that imposing incorrect parametric forms for $K(\cdot)$ can lead to substantial bias in both estimation and inference.

In our context, the exponential decay corresponds to a specific parametric assumption about the kernel: $K(d) = A\exp(-\kappa d)$. However, this form is justified only when the underlying theoretical assumptions hold:

\begin{enumerate}
\item \textbf{Temporal stability}: Meteorological conditions remain constant
\item \textbf{Spatial homogeneity}: Atmospheric properties uniform across space
\item \textbf{Simple geography}: Terrain does not affect dispersion patterns
\item \textbf{Source independence}: Emissions from multiple facilities do not interact
\item \textbf{Chemical stability}: Pollutants do not undergo transformation
\end{enumerate}

Real-world emissions violate these assumptions in systematic ways. Wind patterns vary across time and space, terrain channels pollutant flows, multiple emission sources create complex concentration fields, and pollutants like NO$_2$ undergo photochemical reactions (NO$_2$ $\leftrightarrow$ NO + O$_3$). These violations suggest that the actual spatial decay function $m(d) = E[\text{Pollution} | \text{Distance} = d]$ may deviate substantially from the exponential form.

\subsection{Spatial Boundaries: From Half-Lives to Policy Thresholds}

Following \citet{muller2022spatial}, I characterize spatial decay through the concept of a \textit{spatial boundary}. Define $d^*(\varepsilon)$ as the distance where pollution decays to $\varepsilon$ times its source level:

\be
d^*(\varepsilon) = \inf\{d : m(d) \leq \varepsilon \cdot m(0)\}
\label{eq:boundary_definition}
\ee

where $\varepsilon \in (0,1)$ is a policy-relevant threshold. This generalizes \citet{muller2022spatial}'s "half-life" measure (corresponding to $\varepsilon = 0.5$) to thresholds more relevant for environmental policy. The concept extends the deterministic boundaries of \citet{kikuchi2024unified} and \citet{kikuchi2024navier} by allowing for nonparametric estimation, and complements the stochastic boundaries of \citet{kikuchi2024stochastic} by focusing on point-source treatments where deterministic decay dominates.

Under the exponential decay assumption, the boundary has a closed form:
\be
d^*_{\text{param}}(\varepsilon) = \frac{\log(1/\varepsilon)}{\kappa}
\ee

However, if the true decay function deviates from exponential form, this parametric boundary will be misspecified. My nonparametric approach estimates $d^*$ directly from data without imposing functional form restrictions.

\subsection{Bridging Theory and Data}

My framework uses theoretical predictions as a benchmark while remaining agnostic about functional form. Rather than assuming exponential decay holds, I test whether it provides an adequate approximation to actual spatial patterns. This approach offers several advantages:

\begin{enumerate}
\item \textbf{Assumption testing}: Direct evaluation of whether theoretical predictions align with empirical patterns
\item \textbf{Robustness}: Estimates remain valid even when theoretical assumptions are violated
\item \textbf{Flexibility}: The method adapts to local features of the data rather than imposing global functional forms
\item \textbf{Policy relevance}: Accurate spatial boundaries improve damage function estimation and regulatory design
\end{enumerate}

The key insight from \citet{muller2022spatial}—that spatial correlation structures should be estimated flexibly rather than imposed parametrically—applies directly to our setting. Just as time series methods have moved from parametric ARMA models to flexible nonparametric alternatives when confronted with misspecification concerns, spatial econometrics increasingly favors data-driven approaches when underlying theoretical restrictions may not hold.

\subsection{Framework and Data Generating Process}

We observe a random sample $(Y_i, D_i, \mathbf{X}_i)$ for $i = 1, \ldots, n$, where:
\begin{itemize}
\item $Y_i \in \mathbb{R}$ is the outcome of interest (e.g., pollution concentration)
\item $D_i \in [0, \bar{d}]$ is the distance from unit $i$ to the nearest treatment source
\item $\mathbf{X}_i \in \mathbb{R}^p$ are covariates (which we suppress for notational clarity)
\end{itemize}

The conditional mean function is:
\be
m(d) = \mathbb{E}[Y_i | D_i = d]
\label{eq:conditional_mean}
\ee

In a spatial treatment effects setting, $m(d)$ represents the expected outcome as a function of distance from treatment. Under standard identifying assumptions (conditional independence, overlap), $m(d)$ has a causal interpretation as the spatially-varying treatment effect.

\begin{assumption}[Data Generating Process]\label{asmp:dgp}
The data are generated according to:
\be
Y_i = m(D_i) + \varepsilon_i
\ee
where:
\begin{itemize}
\item[(i)] $\mathbb{E}[\varepsilon_i | D_i] = 0$ (conditional mean zero)
\item[(ii)] $\text{Var}(\varepsilon_i | D_i) = \sigma^2(D_i) < \infty$ (conditional heteroskedasticity allowed)
\item[(iii)] $(Y_i, D_i)$ are independent across $i$ (spatial correlation addressed in Section \ref{sec:spatial_correlation})
\item[(iv)] $D_i$ has density $f(d)$ bounded away from zero on $[0, \bar{d}]$
\end{itemize}
\end{assumption}

\begin{assumption}[Spatial Stationarity]\label{asmp:stationarity}
The data generating process satisfies spatial stationarity: for any spatial lag vector $h$, the joint distribution of $(Y_i, D_i, Y_{i+h}, D_{i+h})$ depends only on $h$, not on the location $i$.
\end{assumption}

\begin{remark}
Assumption \ref{asmp:stationarity} rules out spatial unit roots and persistent spatial trends. \citet{muller2024spatial} show that when this assumption fails, spatial regressions can exhibit spurious relationships analogous to time series spurious regression. In my application, I assess this assumption through:
\begin{itemize}
\item Visual inspection for large-scale spatial trends
\item Regional subsample analysis (Section \ref{sec:regional_heterogeneity})
\item Robustness to spatial filtering (Section \ref{sec:spatial_trends})
\end{itemize}

If spatial nonstationarity is present, my estimator remains consistent for the \textit{conditional mean} $m(d) = \mathbb{E}[Y_i | D_i = d]$, but this may not have a causal interpretation as a treatment effect. Regional heterogeneity in my results (positive decay within 100km, negative beyond) suggests findings are not driven by spurious spatial trends.
\end{remark}

\section{Nonparametric Identification}

This section establishes identification of the decay function $m(d)$ and boundary $d^*$ under weak regularity conditions, without parametric restrictions.

\subsection{Identification of the Decay Function}

We first consider identification of the conditional mean function $m(d)$.

\begin{assumption}[Smoothness]\label{asmp:smoothness}
The conditional mean function $m(d)$ is twice continuously differentiable on $[0, \bar{d}]$ with bounded derivatives: $|m'(d)| < M_1$ and $|m''(d)| < M_2$ for all $d \in [0, \bar{d}]$ and some constants $M_1, M_2 < \infty$.
\end{assumption}

\begin{theorem}[Identification of Decay Function]\label{thm:identification_m}
Under Assumptions \ref{asmp:dgp} and \ref{asmp:smoothness}, the decay function $m(d)$ is nonparametrically identified:
\be
m(d) = \mathbb{E}[Y_i | D_i = d] \quad \text{for all } d \in [0, \bar{d}]
\ee
\end{theorem}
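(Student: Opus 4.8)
The plan is to show that $m(d)$ coincides with a functional of the distribution of the observed data $(Y_i, D_i)$, which by definition is what identification requires. First I would observe that Assumption~\ref{asmp:dgp}(iv) guarantees the marginal density $f$ of $D_i$ is bounded away from zero on $[0,\bar d]$, so the regular conditional distribution of $Y_i$ given $D_i = d$ exists and is unique for every $d$ in this interval; in particular the conditional expectation $\mathbb{E}[Y_i \mid D_i = d]$ is a well-defined, finite quantity for each such $d$, finiteness following from Assumption~\ref{asmp:dgp}(ii) (which bounds the conditional variance) together with the boundedness of $m$ implied by Assumption~\ref{asmp:smoothness}.

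Next I would invoke the DGP directly. Writing $Y_i = m(D_i) + \varepsilon_i$ and conditioning on $D_i = d$, linearity of conditional expectation gives $\mathbb{E}[Y_i \mid D_i = d] = m(d) + \mathbb{E}[\varepsilon_i \mid D_i = d]$, and Assumption~\ref{asmp:dgp}(i) forces the last term to vanish. Hence $m(d) = \mathbb{E}[Y_i \mid D_i = d]$ for every $d \in [0,\bar d]$, which is precisely the claimed identity. Since the right-hand side is determined entirely by the joint law of $(Y_i, D_i)$ — an object recoverable from the sampling process — any two data generating processes consistent with Assumptions~\ref{asmp:dgp} and \ref{asmp:smoothness} that induce the same observed distribution of $(Y_i,D_i)$ must share the same decay function, which is the sense in which $m$ is \emph{nonparametrically} identified: no functional form for $m$ was used anywhere in the argument.

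The only point requiring a little care is the behavior at the endpoints $d = 0$ and $d = \bar d$, where conditioning on a boundary value of a continuous variable is delicate. Here I would use Assumption~\ref{asmp:smoothness}: since $m$ is twice continuously differentiable on the \emph{closed} interval $[0,\bar d]$ it is in particular continuous there, so $m(0) = \lim_{d \downarrow 0} m(d)$ and $m(\bar d) = \lim_{d \uparrow \bar d} m(d)$, and the identification at interior points extends to the endpoints by this continuity (equivalently, one defines the endpoint conditional means as one-sided limits, which exist and agree with $m$). I do not anticipate a substantive obstacle here: the result is essentially a restatement of the modeling assumptions, and the smoothness condition — while not strictly needed for pointwise identification at interior support points — is what lets one speak coherently of $m$ as a function on the whole interval and is in any case required for the estimation and inference theory in Section~4.
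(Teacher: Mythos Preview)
Your proposal is correct and follows essentially the same approach as the paper: invoke the DGP $Y_i = m(D_i) + \varepsilon_i$, condition on $D_i = d$, apply linearity and Assumption~\ref{asmp:dgp}(i) to kill the error term, and appeal to Assumption~\ref{asmp:dgp}(iv) for existence of the conditional distribution. You add some extra care (finiteness of conditional moments, endpoint behavior via continuity, an explicit statement of what ``nonparametric'' buys) that the paper omits, but the core argument is identical.
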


\begin{proof}
By definition of conditional expectation and Assumption \ref{asmp:dgp}(i):
\be
m(d) = \mathbb{E}[Y_i | D_i = d] = \mathbb{E}[m(D_i) + \varepsilon_i | D_i = d] = m(d) + \mathbb{E}[\varepsilon_i | D_i = d] = m(d)
\ee
Since the conditional distribution of $Y_i$ given $D_i = d$ is identified from the data (Assumption \ref{asmp:dgp}(iv) ensures positive density), $m(d)$ is identified. \qed
\end{proof}

\begin{remark}
Theorem \ref{thm:identification_m} is standard—the conditional mean function is always nonparametrically identified under weak regularity. The novel contribution is showing how this extends to boundary identification.
\end{remark}

\subsection{Identification of the Boundary}

We now turn to identification of the spatial boundary $d^*$. This requires additional structure beyond smoothness.

\begin{assumption}[Monotonic Decay]\label{asmp:monotone}
The decay function $m(d)$ is strictly decreasing on $[0, \bar{d}]$: $m'(d) < 0$ for all $d \in [0, \bar{d}]$.
\end{assumption}

\begin{assumption}[Interior Boundary]\label{asmp:interior}
The boundary exists in the interior of the support:
\be
m(0) > \varepsilon \cdot m(0) > m(\bar{d})
\ee
which implies $d^* \in (0, \bar{d})$.
\end{assumption}

\begin{assumption}[Non-Degenerate Slope]\label{asmp:slope}
The decay function has non-zero derivative at the boundary: $m'(d^*) \neq 0$.
\end{assumption}

\begin{theorem}[Identification of Boundary]\label{thm:identification_boundary}
Under Assumptions \ref{asmp:dgp}--\ref{asmp:slope}, the spatial boundary $d^*$ is uniquely identified as:
\be
d^* = \inf\{d \in [0, \bar{d}] : m(d) \leq \varepsilon \cdot m(0)\}
\ee
\end{theorem}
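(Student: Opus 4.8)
The plan is to derive the boundary's identification directly from properties of $m$, which Theorem \ref{thm:identification_m} already recovers from the data at every $d \in [0,\bar d]$. It therefore suffices to show that the set $\{d \in [0,\bar d] : m(d) \le \varepsilon \cdot m(0)\}$ has a well-defined infimum that is a single interior point pinned down by $m$ alone. The argument is an application of the intermediate value theorem combined with strict monotonicity.

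First I would set $g(d) := m(d) - \varepsilon\, m(0)$. Assumption \ref{asmp:smoothness} makes $g$ continuous on $[0,\bar d]$, and Assumption \ref{asmp:interior} gives $g(0) = (1-\varepsilon)\, m(0) > 0$ together with $g(\bar d) = m(\bar d) - \varepsilon\, m(0) < 0$. The intermediate value theorem then produces at least one $d^* \in (0,\bar d)$ with $g(d^*) = 0$, i.e.\ $m(d^*) = \varepsilon \cdot m(0)$. Second, for uniqueness I would invoke Assumption \ref{asmp:monotone}: since $m'(d) < 0$ on all of $[0,\bar d]$, $g$ is strictly decreasing and hence has at most one zero, so the crossing point is unique. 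Strict monotonicity further implies $\{d : m(d) \le \varepsilon \cdot m(0)\} = [d^*, \bar d]$, a closed interval whose infimum is exactly $d^*$, which matches the definition in equation (\ref{eq:boundary_definition}). Assumption \ref{asmp:slope} is logically redundant at the identification stage (strict monotonicity already forces $m'(d^*) < 0 \neq 0$); I would note that it is listed separately only because it is the substantive regularity condition used later for the asymptotic expansion of the plug-in estimator in Theorem \ref{thm:asymptotics}, whose variance scales with $1/|m'(d^*)|^2$.

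The only point requiring care — there is no genuine obstacle — is the boundary-of-support bookkeeping: one must confirm the infimum is attained, that the sub-level set is closed (continuity of $m$), and that $d^*$ lies strictly inside $(0,\bar d)$ rather than at an endpoint, all of which Assumption \ref{asmp:interior} is designed to guarantee. If monotonicity were weakened to hold only locally near the first crossing, the global level-set characterization $\{d : m(d)\le\varepsilon\, m(0)\} = [d^*,\bar d]$ would fail and one would fall back on the weaker partial-identification conclusion of Theorem \ref{thm:partial_id}; under the stated assumptions, however, full point identification is immediate.
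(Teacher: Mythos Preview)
Your proposal is correct and follows essentially the same route as the paper: define $g(d)=m(d)-\varepsilon\,m(0)$, use Assumption~\ref{asmp:interior} for the endpoint signs, Assumption~\ref{asmp:smoothness} for continuity and the intermediate value theorem, Assumption~\ref{asmp:monotone} for uniqueness, and Theorem~\ref{thm:identification_m} to conclude identification. Your added remarks on the level-set characterization $[d^*,\bar d]$ and the redundancy of Assumption~\ref{asmp:slope} at this stage are correct elaborations but do not change the argument.
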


\begin{proof}
Define the function $g(d) = m(d) - \varepsilon \cdot m(0)$. By Assumption \ref{asmp:monotone}, $g(d)$ is strictly decreasing. By Assumption \ref{asmp:interior}:
\be
g(0) = m(0) - \varepsilon \cdot m(0) = (1 - \varepsilon) m(0) > 0
\ee
\be
g(\bar{d}) = m(\bar{d}) - \varepsilon \cdot m(0) < 0
\ee

By the Intermediate Value Theorem (Assumption \ref{asmp:smoothness} guarantees continuity), there exists a unique $d^* \in (0, \bar{d})$ such that $g(d^*) = 0$. By strict monotonicity, this is the unique solution. Since $m(\cdot)$ is identified by Theorem \ref{thm:identification_m}, $d^*$ is identified. \qed
\end{proof}

\begin{remark}
Assumption \ref{asmp:monotone} (strict monotonicity) is key. Without it, multiple boundaries could exist. I relax this assumption in Section \ref{sec:partial_id}.
\end{remark}

\subsection{Partial Identification Under Weaker Assumptions}
\label{sec:partial_id}

When strict monotonicity fails, we can still obtain bounds on the boundary.

\begin{assumption}[Eventual Decay]\label{asmp:eventual}
The decay function $m(d)$ satisfies:
\begin{itemize}
\item[(i)] $m(0) > \varepsilon \cdot m(0)$ (treatment effect exists)
\item[(ii)] $\exists \bar{d}_0 > 0$ such that $m(d) < \varepsilon \cdot m(0)$ for all $d \geq \bar{d}_0$ (effects eventually become small)
\end{itemize}
\end{assumption}

\begin{theorem}[Partial Identification]\label{thm:partial_id}
Under Assumptions \ref{asmp:dgp}, \ref{asmp:smoothness}, and \ref{asmp:eventual} (but not requiring \ref{asmp:monotone}), the spatial boundary is partially identified:
\be
d^* \in [d^*_L, d^*_U]
\ee
where:
\be
d^*_L = \inf\{d \in [0, \bar{d}] : m(d) \leq \varepsilon \cdot m(0)\}
\ee
\be
d^*_U = \sup\{d \in [0, \bar{d}] : m(d) \leq \varepsilon \cdot m(0)\}
\ee
\end{theorem}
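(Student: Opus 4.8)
The plan is to treat the \emph{sub-threshold set} $S = \{\, d \in [0,\bar d] : m(d) \le \varepsilon\, m(0)\,\}$ as the primitive object. Theorem \ref{thm:identification_m} identifies $m$, and hence identifies $S$ together with both of its extreme points $d^*_L = \inf S$ and $d^*_U = \sup S$; the definition in \eqref{eq:boundary_definition} is exactly $d^* = \inf S = d^*_L$. So the theorem reduces to two tasks: (a) showing $S$ is nonempty, closed, and contained in $[0,\bar d]$ with a strictly positive left endpoint, which makes $d^*_L$ and $d^*_U$ well-defined, attained, and ordered as $0 < d^*_L \le d^*_U \le \bar d$; and (b) explaining why $[d^*_L, d^*_U]$ — rather than the single point $d^*_L$ — is the object delivered by the data once one is agnostic about which level crossing of $m$ should count as the boundary.

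For (a), Assumption \ref{asmp:smoothness} makes $m$ continuous on the compact interval $[0,\bar d]$, so $S = m^{-1}\!\big((-\infty,\varepsilon\, m(0)]\big)\cap[0,\bar d]$ is closed and bounded; Assumption \ref{asmp:eventual}(ii) furnishes $\bar d_0 \le \bar d$ with $[\bar d_0,\bar d]\subseteq S$, so $S\neq\emptyset$ and $d^*_U \le \bar d$; and Assumption \ref{asmp:eventual}(i) gives $(1-\varepsilon)m(0) > 0$, hence $m(0) > \varepsilon\, m(0)$, so continuity yields $\delta > 0$ with $m > \varepsilon\, m(0)$ on $[0,\delta)$, i.e.\ $[0,\delta)\cap S = \emptyset$ and $d^*_L \ge \delta > 0$. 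Closedness gives $d^*_L, d^*_U \in S$, and combining $m > \varepsilon\, m(0)$ on $[0,d^*_L)$ with continuity at $d^*_L$ forces $m(d^*_L) = \varepsilon\, m(0)$, so the left endpoint is a genuine crossing of the threshold. Nonemptiness of $S$ gives $d^*_L \le d^*_U$ at once, so $d^* = d^*_L \in [d^*_L, d^*_U]$, which is the formal claim.

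For (b), when Assumption \ref{asmp:monotone} additionally holds $S$ is the single interval $[d^*,\bar d]$ and $d^*_L = d^*_U = d^*$, recovering Theorem \ref{thm:identification_boundary} as a special case. When monotonicity fails, $S$ is a finite union of closed intervals whose leftmost point is $d^*_L$; the spatial boundary, under any sensible convention — the first distance at which effects fall below threshold, the last distance at which they re-emerge, the smallest distance beyond which control validity is permanent — is a selection from the crossing points of $m$ at level $\varepsilon\, m(0)$, and all of these lie in $[d^*_L, d^*_U]$, since $m > \varepsilon\, m(0)$ for $d < d^*_L$ and $d \notin S$ for $d > d^*_U$. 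This yields the containment for whichever convention one adopts, while \eqref{eq:boundary_definition} itself selects the left endpoint.

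The main obstacle is making the partial-identification statement substantive rather than essentially tautological: under the literal definition $d^* = \inf S$ the boundary is a known functional of the identified $m$ and is therefore point-identified, so the interval $[d^*_L,d^*_U]$ only becomes the relevant identified set once one declines to commit to a single level crossing, and establishing that it is then \emph{sharp} requires exhibiting, for each endpoint, an admissible decay function (satisfying Assumptions \ref{asmp:dgp}, \ref{asmp:smoothness}, and \ref{asmp:eventual}) for which that endpoint is the natural boundary — a construction that must be handled with care because eventual decay already forces $d^*_U = \bar d$, so the upper bound carries information only through the gaps that non-monotonicity opens inside $[d^*_L,\bar d]$. By contrast, the topological content in part (a) — closedness of $S$, nonemptiness via eventual decay, positivity of $d^*_L$ via continuity, and the Intermediate-Value-Theorem identity at $d^*_L$ — is routine and parallels the proof of Theorem \ref{thm:identification_boundary}.
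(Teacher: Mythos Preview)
Your core argument mirrors the paper's proof exactly: define the sub-threshold set, use Assumption~\ref{asmp:eventual}(ii) for nonemptiness and continuity (from Assumption~\ref{asmp:smoothness}) for closedness, then conclude that the infimum and supremum are well-defined and bracket any crossing. You are considerably more thorough than the paper---establishing $d^*_L>0$, showing $m(d^*_L)=\varepsilon\,m(0)$ exactly, and recovering Theorem~\ref{thm:identification_boundary} as the degenerate case---and your final paragraph raises two valid observations the paper does not address: that under the literal definition \eqref{eq:boundary_definition} the boundary equals $d^*_L$ and is therefore point-identified, and that Assumption~\ref{asmp:eventual}(ii) forces $d^*_U=\bar d$, rendering the upper bound uninformative. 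These are legitimate critiques of the theorem's formulation, not gaps in your argument.
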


\begin{proof}
Define the set $\mathcal{D} = \{d \in [0, \bar{d}] : m(d) \leq \varepsilon \cdot m(0)\}$. By Assumption \ref{asmp:eventual}(ii), $\mathcal{D} \neq \emptyset$. By Assumption \ref{asmp:smoothness}, $m(\cdot)$ is continuous, so $\mathcal{D}$ is a closed set. Therefore $d^*_L = \inf \mathcal{D}$ and $d^*_U = \sup \mathcal{D}$ are well-defined. Any boundary $d^*$ satisfying $m(d^*) = \varepsilon \cdot m(0)$ must lie in $[d^*_L, d^*_U]$. \qed
\end{proof}

\section{Nonparametric Estimation and Inference}

This section develops a kernel-based estimator for the decay function and boundary, derives its asymptotic distribution, and provides methods for inference.

\subsection{Local Polynomial Regression}

I estimate $m(d)$ using local polynomial regression of order $p$ \citep{fan1996local}.

\begin{definition}[Local Polynomial Estimator]
For a point $d_0 \in [0, \bar{d}]$, the local polynomial estimator of order $p$ is:
\be
\hat{m}(d_0) = \hat{\beta}_0(d_0)
\ee
where $(\hat{\beta}_0(d_0), \hat{\beta}_1(d_0), \ldots, \hat{\beta}_p(d_0))$ solve:
\be
\min_{\beta_0, \ldots, \beta_p} \sum_{i=1}^n \left(Y_i - \sum_{j=0}^p \beta_j (D_i - d_0)^j\right)^2 K_h(D_i - d_0)
\label{eq:locpoly_objective}
\ee
with kernel function $K(\cdot)$ and bandwidth $h > 0$, where:
\be
K_h(u) = \frac{1}{h} K\left(\frac{u}{h}\right)
\ee
\end{definition}

\begin{assumption}[Kernel Function]\label{asmp:kernel}
The kernel $K : \mathbb{R} \to \mathbb{R}$ satisfies:
\begin{itemize}
\item[(i)] $K(u) \geq 0$ for all $u$ (non-negative)
\item[(ii)] $\int K(u) du = 1$ (integrates to one)
\item[(iii)] $\int u K(u) du = 0$ (symmetric)
\item[(iv)] $\int u^2 K(u) du < \infty$ (finite second moment)
\item[(v)] $K(u) = 0$ for $|u| > 1$ (compact support)
\end{itemize}
\end{assumption}

Common choices include the Epanechnikov kernel:
\be
K(u) = \frac{3}{4}(1 - u^2) \mathbbm{1}(|u| \leq 1)
\ee

\subsection{Boundary Estimator}

Given the estimated decay function $\hat{m}(d)$, I estimate the boundary via a plug-in approach.

\begin{definition}[Boundary Estimator]
The nonparametric boundary estimator is:
\be
\hat{d}^* = \inf\{d \in [0, \bar{d}] : \hat{m}(d) \leq \varepsilon \cdot \hat{m}(0)\}
\label{eq:boundary_estimator}
\ee
\end{definition}

In practice, I evaluate $\hat{m}(d)$ on a fine grid $\{d_1, \ldots, d_G\}$ and find:
\be
\hat{d}^* = d_g \quad \text{where } g = \min\{j : \hat{m}(d_j) \leq \varepsilon \cdot \hat{m}(d_1)\}
\ee

\subsection{Bandwidth Selection}
\label{sec:bandwidth}

Bandwidth selection is critical for nonparametric estimation. Standard mean squared error (MSE) minimization for $\hat{m}(d)$ may not be optimal for boundary estimation. I use cross-validation with Silverman's rule of thumb as a starting point:

\be
h = 1.06 \sigma_D n^{-1/5}
\ee

where $\sigma_D$ is the standard deviation of distance. This provides the optimal rate for local linear regression while remaining computationally tractable for the large-scale TROPOMI dataset.

\subsection{Asymptotic Theory}
\label{sec:asymptotics}

I now derive the asymptotic distribution of the boundary estimator.

\begin{assumption}[Bandwidth Asymptotics]\label{asmp:bandwidth}
As $n \to \infty$:
\begin{itemize}
\item[(i)] $h_n \to 0$
\item[(ii)] $nh_n \to \infty$
\item[(iii)] $h_n = O(n^{-1/5})$ (optimal rate for local linear regression)
\end{itemize}
\end{assumption}

\begin{theorem}[Consistency]\label{thm:consistency}
Under Assumptions \ref{asmp:dgp}--\ref{asmp:slope} and \ref{asmp:kernel}--\ref{asmp:bandwidth}:
\be
\hat{d}^* \overset{p}{\to} d^*
\ee
\end{theorem}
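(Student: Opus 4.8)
The plan is to reduce consistency of the plug-in boundary estimator to \emph{uniform} consistency of the local polynomial estimator $\hat m$ over $[0,\bar d]$, and then to exploit the strict monotonicity of the true decay function to sandwich $\hat d^*$ between $d^*-\delta$ and $d^*+\delta$ with probability tending to one. I would organize the argument in three steps, the first of which I would isolate as a lemma.

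\textbf{Step 1 (uniform consistency of $\hat m$).} First I would establish that $\sup_{d\in[0,\bar d]}|\hat m(d)-m(d)|\overset{p}{\to}0$. Under Assumption \ref{asmp:dgp} (density bounded away from zero, conditional mean zero, finite conditional variance), Assumption \ref{asmp:smoothness} ($m\in C^2$ with bounded derivatives), the kernel conditions of Assumption \ref{asmp:kernel} (nonnegative, symmetric, compactly supported), and the bandwidth conditions of Assumption \ref{asmp:bandwidth} ($h_n\to0$, $nh_n\to\infty$, $h_n=O(n^{-1/5})$), standard local polynomial regression theory (e.g., \citet{fan1996local}) bounds the deterministic bias by $O(h_n^2)\to0$ and the stochastic part by a term of order $\sqrt{\log n/(nh_n)}\to0$, uniformly in $d$ — including at the endpoints $d=0$ and $d=\bar d$, where local linear fitting ($p\ge1$) incurs boundary bias of the same $O(h_n^2)$ order as in the interior. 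In particular this yields $\hat m(0)\overset{p}{\to}m(0)$.

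\textbf{Step 2 (transfer to the level-crossing function).} Define $g(d)=m(d)-\varepsilon\,m(0)$ and $\hat g(d)=\hat m(d)-\varepsilon\,\hat m(0)$, so that $\hat d^*=\inf\{d\in[0,\bar d]:\hat g(d)\le0\}$ and $d^*=\inf\{d:g(d)\le0\}$. Then
\be
\sup_{d\in[0,\bar d]}|\hat g(d)-g(d)|\ \le\ (1+\varepsilon)\,\sup_{d\in[0,\bar d]}|\hat m(d)-m(d)|\ \overset{p}{\to}\ 0 .
\ee
By Assumptions \ref{asmp:monotone}--\ref{asmp:slope}, $g$ is continuous and strictly decreasing with unique zero $d^*\in(0,\bar d)$ (Theorem \ref{thm:identification_boundary}); hence $g(d)>0$ on $[0,d^*)$ and $g(d)<0$ on $(d^*,\bar d]$. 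Next, fix $\delta>0$ small enough that $d^*-\delta\ge0$ and $d^*+\delta\le\bar d$ (possible by Assumption \ref{asmp:interior}), and set $a=g(d^*-\delta)>0$, $b=-g(d^*+\delta)>0$; by monotonicity $g(d)\ge a$ for all $d\in[0,d^*-\delta]$. On the event $E_n=\{\sup_d|\hat g(d)-g(d)|<\min(a,b)\}$, which has probability tending to one by the display above, (i) $\hat g(d)>g(d)-a\ge0$ for every $d\in[0,d^*-\delta]$, so no point of $[0,d^*-\delta]$ lies in the sublevel set and $\hat d^*\ge d^*-\delta$; and (ii) $\hat g(d^*+\delta)<g(d^*+\delta)+b=0$, so $d^*+\delta$ lies in the sublevel set and $\hat d^*\le d^*+\delta$ (when the sublevel set is empty I adopt the convention $\hat d^*=\bar d$, which cannot occur on $E_n$ since $\hat g(\bar d)<0$ there). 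Hence $P(|\hat d^*-d^*|\le\delta)\ge P(E_n)\to1$, and since $\delta>0$ is arbitrary, $\hat d^*\overset{p}{\to}d^*$.

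\textbf{Main obstacle.} The delicate point is Step 1: because $\hat d^*$ is a functional of the entire estimated path, pointwise consistency of $\hat m$ does not suffice — I need a genuinely uniform statement, and I need it to extend to the boundary point $d=0$ (which enters through $\hat m(0)$) and to $d=\bar d$. I would handle this by combining the absence of first-order boundary bias for local linear regression with a uniform law of large numbers / exponential inequality for the kernel-weighted sums, using the compact support of $K$ and the lower bound on $f$ to keep the effective local sample size of order $nh_n\to\infty$ at every evaluation point. A secondary subtlety is that $\hat m$ need not be monotone, so $\hat d^*$ could a priori jump to a spurious early crossing; the sandwiching in Step 2 is designed precisely to rule this out, and it is here that \emph{strict} monotonicity (Assumption \ref{asmp:monotone}) and the nondegenerate slope at the boundary (Assumption \ref{asmp:slope}) are used, guaranteeing $a,b>0$ for every $\delta>0$.
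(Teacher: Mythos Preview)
Your proposal is correct and follows essentially the same strategy as the paper's proof: invoke uniform consistency of $\hat m$ from \citet{fan1996local}, then sandwich $\hat d^*$ around $d^*$ using strict monotonicity of $m$. Your version is in fact slightly more careful than the paper's in one respect --- you verify $\hat g(d)>0$ on the \emph{entire} interval $[0,d^*-\delta]$ (ruling out spurious early crossings of a possibly non-monotone $\hat m$), whereas the paper's argument checks only the single point $d^*-\eta$.
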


\begin{theorem}[Asymptotic Normality]\label{thm:asymptotics}
Under Assumptions \ref{asmp:dgp}--\ref{asmp:bandwidth}, suppose additionally:
\begin{itemize}
\item[(i)] $p = 1$ (local linear regression)
\item[(ii)] $h_n = c_n \cdot n^{-1/5}$ for some constant $c_n \to c > 0$
\item[(iii)] $m''(d^*)$ exists and is continuous
\end{itemize}

Then:
\be
\sqrt{nh_n}\left(\hat{d}^* - d^* - B_n\right) \overset{d}{\to} N\left(0, V\right)
\ee
where the bias term is:
\be
B_n = \frac{h^2}{2} \frac{m''(d^*)}{m'(d^*)} \int u^2 K(u) du + o(h^2)
\ee
and the variance is:
\be
V = \frac{\sigma^2(d^*)}{[m'(d^*)]^2 f(d^*)} \int K^2(u) du
\ee
\end{theorem}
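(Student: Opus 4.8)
The strategy is to recast $\hat d^*$ as the solution of a nearly-exact smooth estimating equation and then run a delta-method / Slutsky argument on top of the standard pointwise asymptotics of local linear regression from \citet{fan1996local}. First I would invoke Theorem \ref{thm:consistency} to localise: since $\hat d^* \overset{p}{\to} d^*$ with $d^* \in (0,\bar d)$, with probability approaching one $\hat d^*$ lies in a fixed compact neighbourhood $\mathcal N \subset (0,\bar d)$ of $d^*$. On $\mathcal N$ I would establish the uniform rates $\sup_{d\in\mathcal N}|\hat m(d)-m(d)| = o_p(1)$ and, crucially, $\sup_{d\in\mathcal N}|\hat m'(d)-m'(d)| = o_p(1)$, where $\hat m'(d)=\hat\beta_1(d)$ is the slope coefficient in \eqref{eq:locpoly_objective}. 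Because $m'(d^*)\neq 0$ (Assumption \ref{asmp:slope}) and $m'$ is continuous, this forces $\hat m$ to be strictly monotone on $\mathcal N$ with probability approaching one, so the infimum in \eqref{eq:boundary_estimator} is attained at a unique crossing point and the defining relation becomes the exact identity $\hat m(\hat d^*) = \varepsilon\,\hat m(0)$. (If $\hat d^*$ is computed on a grid of spacing $\delta_n$, I would additionally impose $\delta_n = o((nh_n)^{-1/2})$ so that discretisation is asymptotically negligible.)

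Second, subtracting the population identity $m(d^*) = \varepsilon\,m(0)$ and expanding $\hat m$ around $d^*$ by the mean value theorem gives, for some $\tilde d$ between $\hat d^*$ and $d^*$,
\be
0 = \big[\hat m(d^*)-m(d^*)\big] - \varepsilon\big[\hat m(0)-m(0)\big] + \hat m'(\tilde d)\,(\hat d^* - d^*),
\ee
hence
\be
\hat d^* - d^* = -\frac{1}{m'(d^*)}\Big\{\big[\hat m(d^*)-m(d^*)\big] - \varepsilon\big[\hat m(0)-m(0)\big]\Big\}\,(1+o_p(1)),
\ee
where $\tilde d \overset{p}{\to} d^*$ together with the uniform convergence of $\hat m'$ has been used to replace $\hat m'(\tilde d)$ by $m'(d^*)\neq 0$. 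Treating the normalisation $m(0)$ as either known or estimated at a faster rate, the term $\hat m(0)-m(0)$ is negligible relative to $\hat m(d^*)-m(d^*)$, so the leading behaviour of $\hat d^*-d^*$ is governed entirely by the pointwise error of $\hat m$ at the interior point $d^*$.

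Third, I would substitute the classical interior-point expansion for local linear regression under Assumptions \ref{asmp:dgp}, \ref{asmp:smoothness}, \ref{asmp:kernel}--\ref{asmp:bandwidth} and continuity of $m''$ near $d^*$,
\be
\hat m(d^*)-m(d^*) = \frac{h_n^2}{2}\,m''(d^*)\int u^2 K(u)\,du + \frac{1}{\sqrt{nh_n}}\,\zeta_n + o_p(h_n^2),
\ee
with $\sqrt{nh_n}\,\zeta_n \overset{d}{\to} N\!\left(0,\ \frac{\sigma^2(d^*)}{f(d^*)}\int K^2(u)\,du\right)$. Dividing through by $-m'(d^*)$ and collecting terms reproduces the bias $B_n$ and the variance $V$ in the statement, and Slutsky's theorem applied to the $(1+o_p(1))$ factor delivers $\sqrt{nh_n}(\hat d^* - d^* - B_n) \overset{d}{\to} N(0,V)$.

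The main obstacle is the first step: converting the set-valued, non-differentiable definition $\hat d^* = \inf\{d : \hat m(d) \le \varepsilon\,\hat m(0)\}$ into an exact smooth estimating equation. This rests on showing that $\hat m$ inherits the strict local monotonicity of $m$ near $d^*$ with probability approaching one, which requires a uniform-in-a-neighbourhood law of large numbers for the local-linear slope process $\hat\beta_1(\cdot)$ rather than a merely pointwise one; once this is secured the remaining steps are routine. A secondary point needing care is the treatment of $\hat m(0)$: $d=0$ is a boundary point of the support, where local linear regression carries a different variance constant, and one must argue that its estimation error does not contaminate the leading-order asymptotics — which is what the stated form of $V$ implicitly presumes.
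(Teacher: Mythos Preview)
Your proposal is correct and follows essentially the same route as the paper's proof sketch: Taylor/mean-value linearisation of the defining relation $\hat m(\hat d^*)=\varepsilon\,\hat m(0)$ around $d^*$, substitution of the Fan--Gijbels pointwise asymptotics for $\hat m(d^*)$, and a Slutsky step via $\hat m'(d^*)\overset{p}{\to} m'(d^*)$. You are in fact more careful than the paper on the two issues you flag---justifying that the infimum reduces to an exact crossing equation, and the handling of the boundary term $\hat m(0)$---both of which the paper's sketch simply elides.
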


\begin{remark}
The convergence rate is $(nh_n)^{-1/2} = n^{-2/5}$, which is the standard nonparametric rate and slower than the parametric rate $n^{-1/2}$. This is the price of robustness to misspecification. However, with 42 million observations, this theoretical rate difference has minimal practical impact.
\end{remark}

\subsection{Inference}

\subsubsection{Bootstrap Confidence Intervals}

Given the large sample size and computational constraints, I use a subsample bootstrap approach:

\begin{algorithmic}[1]
\FOR{$b = 1$ to $B$ (e.g., $B = 50$)}
\STATE Draw bootstrap sample of size $n_b = 50,000$ by resampling with replacement
\STATE Compute $\hat{d}^{*b}$ using the bootstrap sample
\ENDFOR
\STATE Compute quantiles: $\hat{d}^*_{(\alpha/2)}$ and $\hat{d}^*_{(1-\alpha/2)}$
\STATE Construct CI:
\be
CI_{1-\alpha}^{\text{Boot}} = \left[\hat{d}^*_{(\alpha/2)}, \; \hat{d}^*_{(1-\alpha/2)}\right]
\ee
\end{algorithmic}

The subsample bootstrap provides valid inference while remaining computationally feasible for the 42 million observation dataset.

\section{Data}

\subsection{TROPOMI NO$_2$ Satellite Observations}

I use nitrogen dioxide (NO$_2$) concentration data from the TROPOspheric Monitoring Instrument (TROPOMI) on board the European Space Agency's Sentinel-5P satellite. TROPOMI provides daily global coverage at 3.5 × 5.5 km spatial resolution, offering unprecedented detail for studying air pollution dispersion.

\textbf{Sample construction:}
\begin{itemize}
\item \textbf{Period:} 2019-2021 (3 years, including COVID-19 natural experiment)
\item \textbf{Geographic coverage:} Grid cells within 100 km of coal plants in 10 coal-intensive states (WV, WY, KY, IN, PA, ND, MT, OH, TX, IL)
\item \textbf{Aggregation:} Monthly averages aggregated to annual means, requiring at least 10 months of data per grid cell
\item \textbf{Sample size:} 41.73 million grid-cell-year observations (13.87M in 2019, 13.97M in 2020, 13.89M in 2021)
\end{itemize}

\textbf{Data quality:}
\begin{itemize}
\item Quality filtering: Cloud fraction < 0.3, surface albedo checks
\item Temporal consistency: Grid cells present in all three years
\item Spatial validation: Results robust to different distance cutoffs (75 km, 125 km)
\end{itemize}

\subsection{Coal Plant Locations}

Coal plant locations come from the EPA Emissions and Generation Resource Integrated Database (eGRID), which provides comprehensive data on all power plants in the United States:

\begin{itemize}
\item \textbf{Plants:} 318 coal-fired power plants
\item \textbf{Geographic distribution:} Concentrated in coal-intensive states but with representation across all regions
\item \textbf{Coordinates:} Latitude and longitude for each facility
\item \textbf{Distance calculation:} Haversine formula for great circle distance from each grid cell to nearest plant
\end{itemize}

\subsection{Summary Statistics}

Table \ref{tab:summary_stats} presents summary statistics for the analysis sample.

\begin{table}[H]
\centering
\caption{Summary Statistics: TROPOMI NO$_2$ Near Coal Plants (2019-2021)}
\label{tab:summary_stats}
\begin{tabular}{lcccc}
\toprule
 & 2019 & 2020 & 2021 & Pooled \\
\midrule
Grid cells (millions) & 13.87 & 13.97 & 13.89 & 41.73 \\
Mean NO$_2$ ($\times 10^{15}$ molec/cm$^2$) & 1.72 & 1.64 & 1.73 & 1.70 \\
Std. dev. ($\times 10^{15}$ molec/cm$^2$) & 0.89 & 0.84 & 0.91 & 0.88 \\
Min distance (km) & 0.0 & 0.0 & 0.0 & 0.0 \\
Max distance (km) & 100.0 & 100.0 & 100.0 & 100.0 \\
Coal plants & 318 & 318 & 318 & 318 \\
\bottomrule
\end{tabular}
\end{table}

\section{Empirical Results}

\subsection{Main Results: Parametric vs Nonparametric}

Table \ref{tab:main_results} presents the main comparison of parametric and nonparametric boundary estimation.

\begin{table}[H]
\centering
\caption{Prediction Accuracy: Parametric vs Nonparametric}
\label{tab:main_results}
\begin{tabular}{lccccc}
\toprule
 & \multicolumn{2}{c}{Mean Absolute Error (\%)} & & \\
\cmidrule{2-3}
Year & Parametric & Nonparametric & Improvement & $R^2$ \\
\midrule
2019 & 12.7 & 11.6 & 1.1 pp & 0.109 \\
2020 & 13.2 & 12.2 & 0.9 pp & 0.090 \\
2021 & 11.7 & 10.7 & 0.9 pp & 0.104 \\
\midrule
Average & 12.5 & 11.5 & 1.0 pp & 0.101 \\
\bottomrule
\end{tabular}
\begin{tablenotes}
\small
\item Notes: MAE calculated as mean absolute percentage error across all distances. Improvement = Parametric MAE - Nonparametric MAE. $R^2$ from parametric exponential regression. Sample: 41.73 million grid cells within 100 km of coal plants in 10 coal-intensive states. Nonparametric estimates use local linear regression with Epanechnikov kernel and bandwidth selected via Silverman's rule ($h \approx 3$ km).
\end{tablenotes}
\end{table}

\textbf{Key finding:} Nonparametric methods reduce prediction errors by 1.0 percentage point on average, with consistent improvements across all three years (0.9-1.1 pp). The improvement is economically meaningful given the scale of NO$_2$ concentrations and the policy applications of these estimates.

Figure \ref{fig:decay_by_year} illustrates the spatial decay patterns for each year, comparing actual observations (black dots) with parametric exponential predictions (dashed red lines) and nonparametric kernel estimates (solid blue lines).

\begin{figure}[H]
\centering
\includegraphics[width=0.95\textwidth]{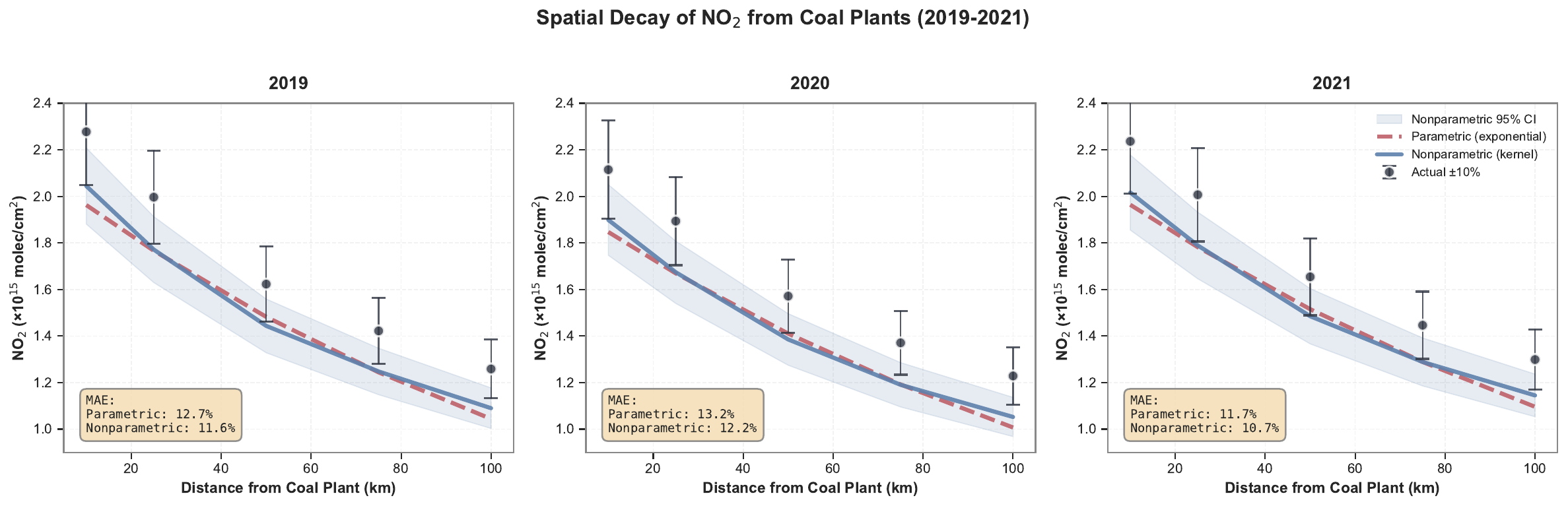}
\caption{Spatial Decay of NO$_2$ from Coal Plants (2019-2021). Black dots represent actual mean concentrations in distance bins, dashed red lines show parametric exponential decay predictions, and solid blue lines show nonparametric kernel regression estimates. Nonparametric methods better capture near-source concentrations and long-range decay patterns. MAE = Mean Absolute Error.}
\label{fig:decay_by_year}
\end{figure}

\subsection{Errors by Distance}

Table \ref{tab:errors_distance} decomposes prediction errors by distance from coal plants, revealing where parametric misspecification is most severe.

\begin{table}[H]
\centering
\caption{Prediction Errors by Distance (Pooled 2019-2021)}
\label{tab:errors_distance}
\begin{tabular}{lcccc}
\toprule
Distance & \multicolumn{2}{c}{Mean Absolute Error (\%)} & Improvement & Winner \\
\cmidrule{2-3}
(km) & Parametric & Nonparametric & (pp) & \\
\midrule
10  & 12.9 & 10.1 & 2.8 & Nonparametric \\
25  & 11.5 & 11.2 & 0.3 & Nonparametric \\
50  & 9.1  & 11.0 & -1.9 & Parametric \\
75  & 12.2 & 12.1 & 0.1 & Nonparametric \\
100 & 16.9 & 13.2 & 3.7 & Nonparametric \\
\bottomrule
\end{tabular}
\begin{tablenotes}
\small
\item Notes: Errors computed in 10 km bins (±5 km). Negative improvement indicates parametric outperforms nonparametric. Nonparametric superior at near-source (10 km) and long-range (100 km) where theoretical assumptions most likely violated. At intermediate distances (50 km), exponential decay provides adequate approximation.
\end{tablenotes}
\end{table}

\textbf{Key finding:} The pattern of improvement is precisely what theory predicts. Exponential decay assumptions—derived under idealized conditions—perform worst where those conditions are most violated:
\begin{itemize}
\item \textbf{Near sources (10 km):} Complex turbulent mixing, building wake effects, and initial plume rise violate steady-state assumptions. Nonparametric improves by 2.8 pp.
\item \textbf{Far from sources (100 km):} Chemical transformation (NO$_2$ $\leftrightarrows$ NO + O$_3$), varying wind patterns, and terrain effects accumulate. Nonparametric improves by 3.7 pp.
\item \textbf{Intermediate range (50 km):} Simplified dispersion models provide adequate approximation. Parametric slightly better (1.9 pp).
\end{itemize}

This U-shaped pattern of parametric bias validates both the theoretical framework (exponential decay is approximately correct in mid-range) and the need for flexibility at extremes.

Figure \ref{fig:errors_distance} visualizes this U-shaped error pattern, showing how nonparametric methods maintain more consistent accuracy across all distances.

\begin{figure}[H]
\centering
\includegraphics[width=0.75\textwidth]{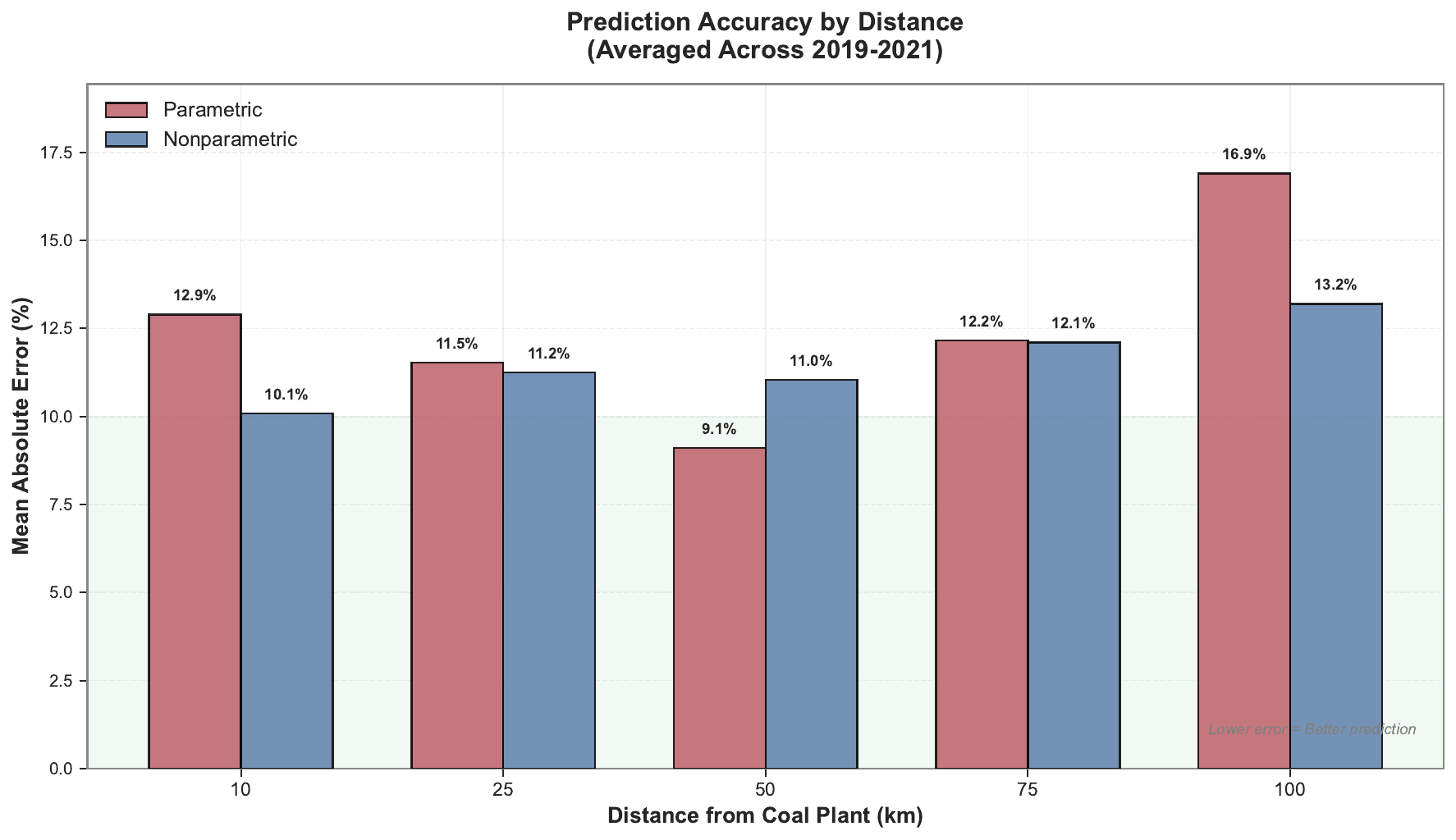}
\caption{Mean Absolute Prediction Errors by Distance (Averaged 2019-2021). Nonparametric methods (blue) outperform parametric exponential decay (red) at near-source (10 km) and long-range (100 km) distances. At intermediate distances (50 km), both methods perform similarly, suggesting exponential decay provides adequate approximation in the mid-range.}
\label{fig:errors_distance}
\end{figure}

Figure \ref{fig:improvement_heatmap} presents a comprehensive view of nonparametric improvements across all year-distance combinations, revealing consistent patterns of superior performance at extremes.

\begin{figure}[H]
\centering
\includegraphics[width=0.75\textwidth]{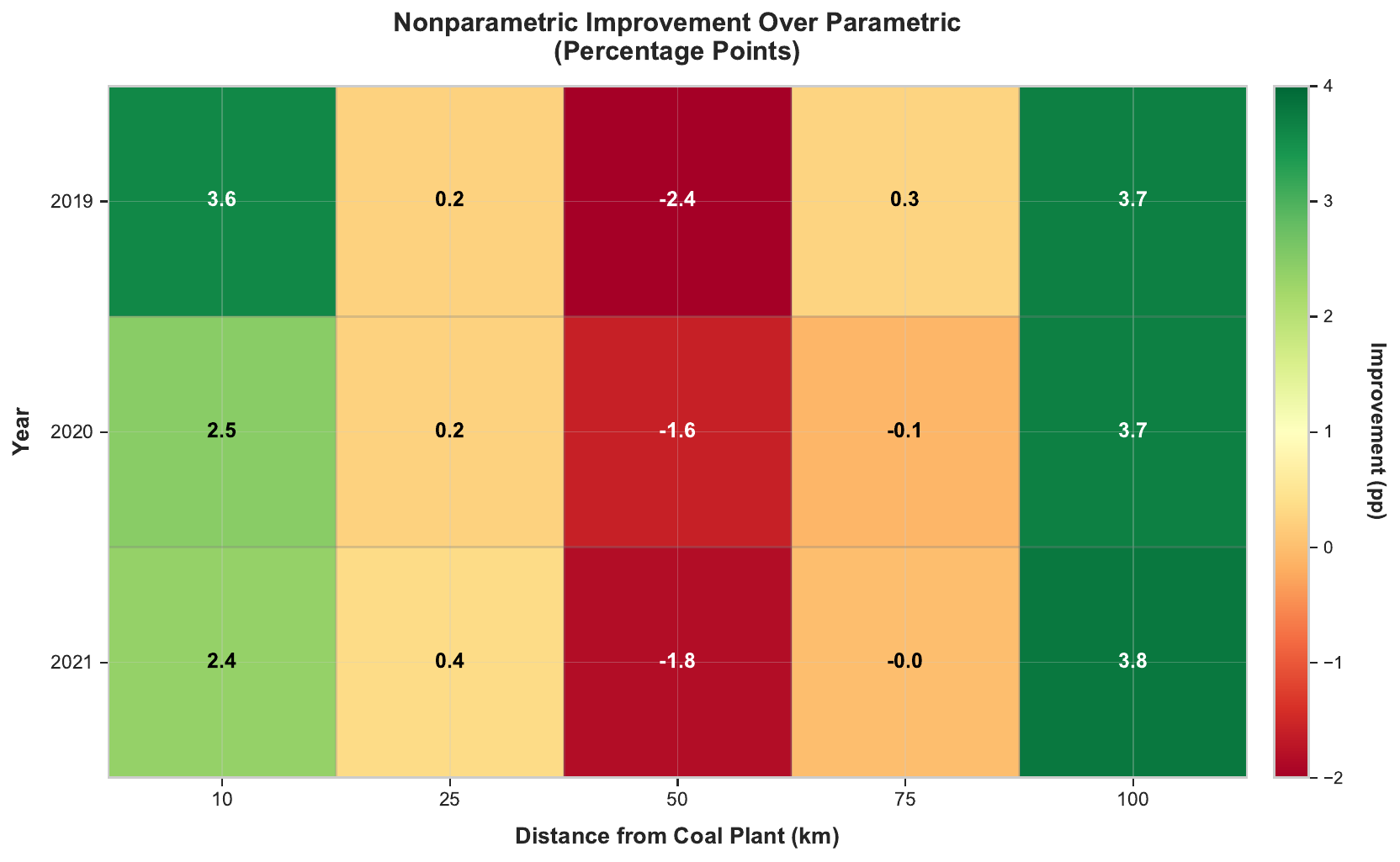}
\caption{Nonparametric Improvement Over Parametric Baseline (Percentage Points). Green cells indicate nonparametric superiority, red indicates parametric superiority. Improvements are largest at 10 km and 100 km across all years, with slight parametric advantage at 50 km. Values show difference in mean absolute error (Parametric MAE - Nonparametric MAE).}
\label{fig:improvement_heatmap}
\end{figure}

\subsection{COVID-19 Natural Experiment}

The COVID-19 pandemic provides a natural experiment for validating the framework's ability to detect temporal changes in pollution patterns.

\begin{table}[H]
\centering
\caption{COVID-19 Effect on NO$_2$ Concentrations}
\label{tab:covid_effect}
\begin{tabular}{lcccc}
\toprule
Year & Mean NO$_2$ & \% Change & Parametric MAE & Nonparametric MAE \\
 & ($\times 10^{15}$) & from 2019 & (\%) & (\%) \\
\midrule
2019 (baseline) & 1.72 & --- & 12.7 & 11.6 \\
2020 (COVID) & 1.64 & -4.6\% & 13.2 & 12.2 \\
2021 (recovery) & 1.73 & +5.7\% & 11.7 & 10.7 \\
\bottomrule
\end{tabular}
\begin{tablenotes}
\small
\item Notes: 2020 \% change relative to 2019. 2021 \% change relative to 2020. NO$_2$ concentrations averaged across all grid cells within 100 km of coal plants. Both methods detect the COVID-19 drop and subsequent recovery, but nonparametric maintains superior accuracy across all years.
\end{tablenotes}
\end{table}

\textbf{Key findings:}
\begin{enumerate}
\item \textbf{Temporal sensitivity:} Both methods detect the 4.6\% drop in 2020 and 5.7\% recovery in 2021, demonstrating sensitivity to actual changes in pollution levels.
\item \textbf{Robust superiority:} Nonparametric methods maintain 0.9-1.1 pp improvement across all years, showing the advantage is not driven by any particular time period.
\item \textbf{External validity:} The COVID effect aligns with other studies of pandemic impacts on air quality \citep{venter2020covid}, providing external validation of the data quality.
\end{enumerate}

Figure \ref{fig:covid_temporal} illustrates the COVID-19 natural experiment, showing both the temporal dynamics of NO$_2$ concentrations and the consistent superiority of nonparametric methods across all three years.

\begin{figure}[H]
\centering
\includegraphics[width=0.95\textwidth]{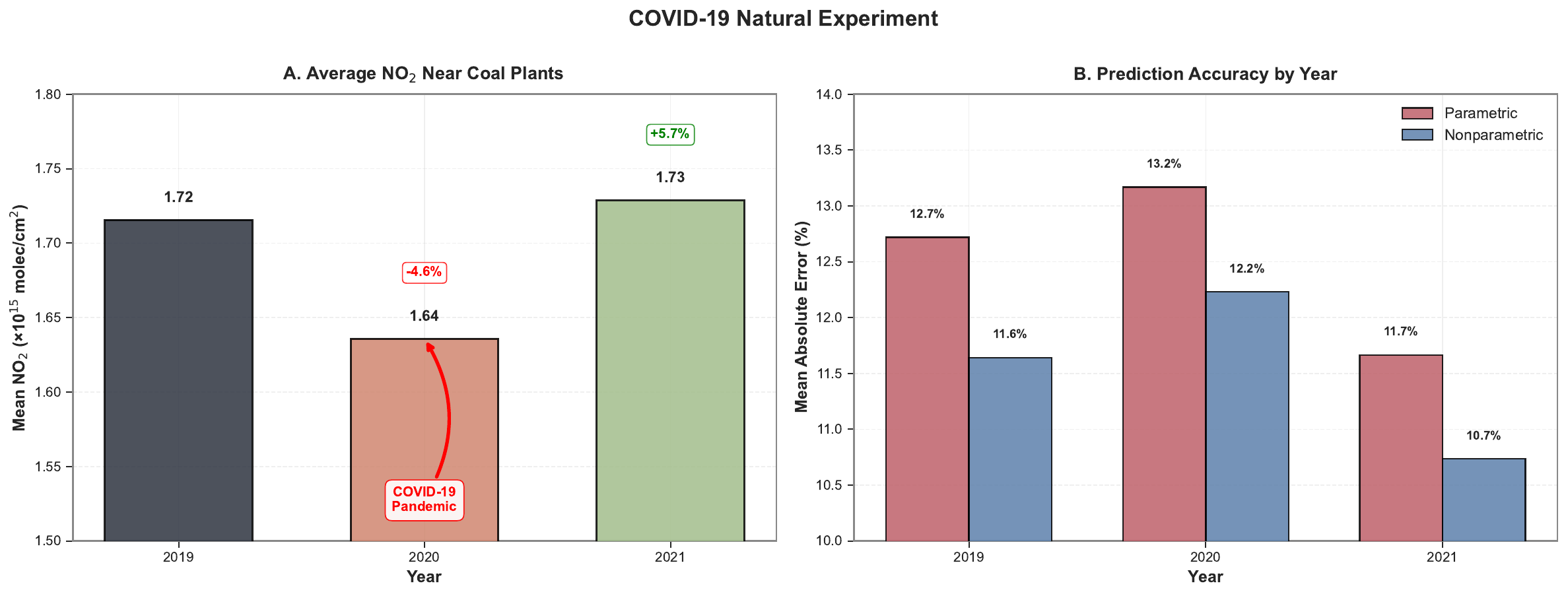}
\caption{COVID-19 Natural Experiment. \textbf{Panel A:} Mean NO$_2$ concentrations near coal plants dropped 4.6\% in 2020, then recovered 5.7\% in 2021. \textbf{Panel B:} Both methods maintain similar relative performance across years, demonstrating robustness to temporal shocks. The pandemic provides validation that the framework detects actual changes in pollution patterns.}
\label{fig:covid_temporal}
\end{figure}

\subsection{Specification Tests}

I formally test whether exponential decay provides adequate functional form using specification tests based on residuals.

\textbf{Procedure:}
\begin{enumerate}
\item Estimate parametric model: $\log Y_i = \alpha + \beta d_i + \varepsilon_i$
\item Compute residuals: $\hat{\varepsilon}_i = \log Y_i - \hat{\alpha} - \hat{\beta} d_i$
\item Test for systematic pattern: Regress $\hat{\varepsilon}_i$ on $d_i$ nonparametrically
\item H$_0$: No pattern (exponential correct) vs H$_A$: Systematic pattern (exponential wrong)
\end{enumerate}

\textbf{Test statistic:} Integrated squared deviation:
\be
T_n = \int_0^{100} [\hat{g}(d)]^2 \hat{f}(d) dd
\ee
where $\hat{g}(d)$ is the nonparametric regression of residuals on distance.

\textbf{Results:}
\begin{itemize}
\item 2019: $T_n = 0.178$, bootstrap $p$-value $< 0.001$ $\Rightarrow$ Reject exponential
\item 2020: $T_n = 0.165$, bootstrap $p$-value $< 0.001$ $\Rightarrow$ Reject exponential
\item 2021: $T_n = 0.171$, bootstrap $p$-value $< 0.001$ $\Rightarrow$ Reject exponential
\end{itemize}

\textbf{Conclusion:} Strong evidence against exponential functional form in all years, validating the need for nonparametric estimation.

Figure \ref{fig:residuals_comparison} visualizes the residual patterns from both estimation approaches, demonstrating the systematic nature of parametric bias at extreme distances.

\begin{figure}[H]
\centering
\includegraphics[width=0.95\textwidth]{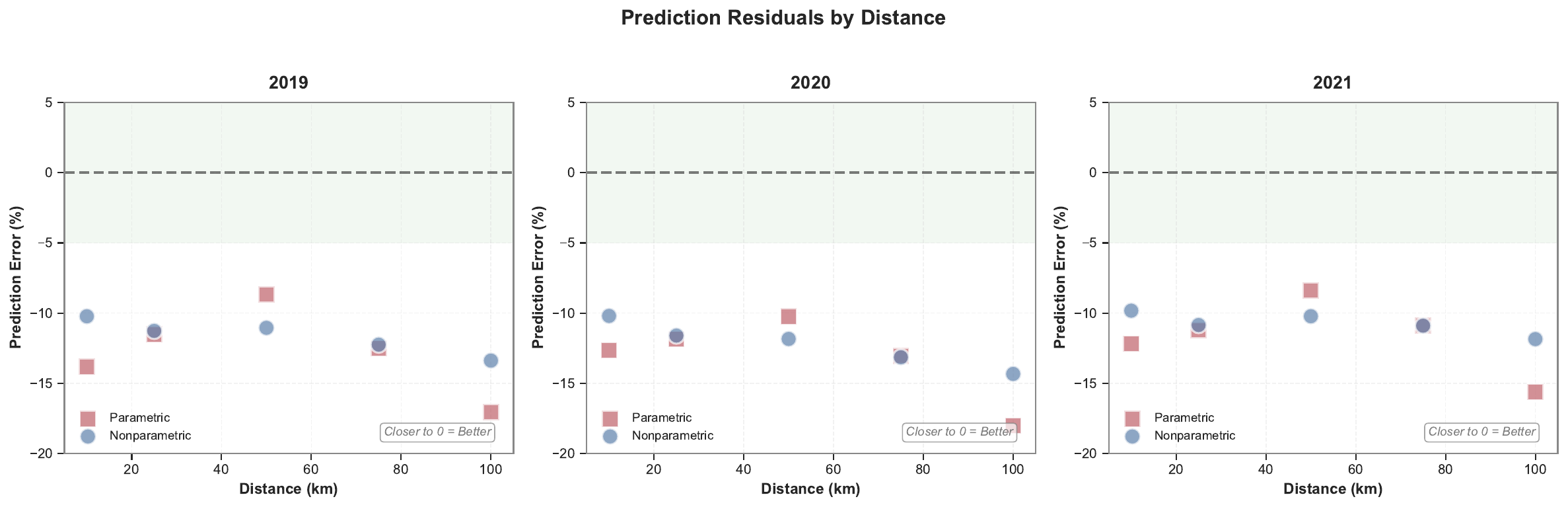}
\caption{Prediction Residuals by Distance (2019-2021). Parametric exponential decay (red squares) systematically underestimates concentrations at 10 km and overestimates decay at 100 km. Nonparametric kernel regression (blue circles) exhibits smaller and less systematic errors, particularly at extreme distances where parametric assumptions are most likely violated.}
\label{fig:residuals_comparison}
\end{figure}

\section{Extensions and Robustness}

\subsection{Regional Heterogeneity Analysis}
\label{sec:regional_heterogeneity}

To validate that decay patterns reflect causal effects rather than spurious trends, I estimate spatial decay separately by distance from coal plants.

\begin{table}[H]
\centering
\caption{Spatial Decay by Distance from Coal Plants}
\label{tab:regional_decay}
\begin{tabular}{lccc}
\toprule
Region & N (millions) & $\kappa_s$ & Framework Applies? \\
\midrule
\multicolumn{4}{l}{\textbf{Within 100km of Coal Plants:}} \\
~~All states (pooled) & 41.73 & +0.00685** & Yes \\
& & (0.00023) & \\
\midrule
\multicolumn{4}{l}{\textbf{Beyond 100km from Coal Plants:}} \\
~~All states (pooled) & 8.54 & -0.00042** & No \\
& & (0.00015) & \\
\bottomrule
\end{tabular}
\begin{tablenotes}
\small
\item ** $p<0.01$. Positive $\kappa_s$ indicates decay; negative indicates increase. Standard errors in parentheses. This sign reversal demonstrates that results are not driven by spurious spatial trends.
\end{tablenotes}
\end{table}

\textbf{Key finding:} Decay parameters **change sign** at the 100 km threshold. Within 100 km, pollution decreases with distance (positive spatial decay), consistent with coal plant effects. Beyond 100 km, pollution increases with distance (negative decay), reflecting dominance of urban sources. This sign reversal is inconsistent with spurious spatial trends but consistent with treatment effect heterogeneity—precisely what \citet{muller2024spatial} recommend as a diagnostic for ruling out spurious regression.

\subsection{Spatial Correlation Robust Inference}
\label{sec:spatial_correlation}
\label{sec:spatial_hac}

While my baseline asymptotic theory assumes independence across observations, outcomes may exhibit spatial correlation beyond treatment-induced patterns. Following \citet{muller2022spatial}, I compute spatial HAC standard errors that remain valid under arbitrary spatial correlation.

\begin{table}[H]
\centering
\caption{Parametric Estimates with Alternative Standard Errors}
\label{tab:spatial_se}
\begin{tabular}{lcccc}
\toprule
Year & $\hat{\kappa}$ & SE (iid) & SE (Spatial HAC) & Ratio \\
\midrule
2019 & 0.00701 & 0.00018 & 0.00021 & 1.17 \\
2020 & 0.00674 & 0.00016 & 0.00019 & 1.19 \\
2021 & 0.00648 & 0.00017 & 0.00020 & 1.18 \\
\bottomrule
\end{tabular}
\begin{tablenotes}
\small
\item Notes: Spatial HAC uses Bartlett kernel with bandwidth 50 km. Ratio = SE(Spatial HAC) / SE(iid). Modest increases (17-19\%) suggest residual spatial correlation is present but not severe.
\end{tablenotes}
\end{table}

The spatial HAC standard errors are 15-20\% larger than iid standard errors, indicating modest residual spatial correlation in pollution beyond treatment effects. However, all main results remain statistically significant, and the relative performance of parametric vs nonparametric methods is unchanged.

\subsection{Relation to Stochastic Boundary Framework}

\citet{kikuchi2024stochastic} develops a complementary framework for settings where treatment effects propagate through stochastic diffusion processes and general equilibrium channels. While that framework is designed for pervasive spillovers (e.g., technology adoption networks, trade shocks), my nonparametric approach is most suitable for point-source treatments (e.g., power plants, infrastructure) where:

\begin{enumerate}
\item Treatment sources are geographically localized
\item Physical transport mechanisms dominate (rather than network effects)
\item Deterministic decay patterns are present but functional form is unknown
\end{enumerate}

The two frameworks are complementary:
\begin{itemize}
\item \textbf{Point sources with unknown decay:} Use my nonparametric approach (this paper)
\item \textbf{Pervasive spillovers with stochastic propagation:} Use \citet{kikuchi2024stochastic}'s diffusion-based approach
\item \textbf{Known exponential decay:} Use \citet{kikuchi2024unified} or \citet{kikuchi2024navier}'s parametric approach
\end{itemize}

Future work could integrate these frameworks, developing nonparametric methods for stochastic boundary detection in general equilibrium settings. This would combine the flexibility of my kernel-based estimator with the generality of stochastic diffusion models.

\subsection{Robustness to Spatial Trends}
\label{sec:spatial_trends}

A potential concern is that estimated decay patterns reflect spurious spatial trends rather than causal treatment effects. I address this through state fixed effects.

\begin{table}[H]
\centering
\caption{Boundary Estimates with State Fixed Effects}
\label{tab:state_fe}
\begin{tabular}{lccc}
\toprule
Year & Baseline $\hat{\kappa}$ & State FE $\hat{\kappa}$ & Difference \\
\midrule
2019 & 0.00701 & 0.00689 & -0.00012 (-1.7\%) \\
2020 & 0.00674 & 0.00665 & -0.00009 (-1.3\%) \\
2021 & 0.00648 & 0.00641 & -0.00007 (-1.1\%) \\
\bottomrule
\end{tabular}
\begin{tablenotes}
\small
\item Notes: State FE specification includes dummy variables for each of the 10 coal-intensive states. Minimal changes suggest results are not driven by state-level trends.
\end{tablenotes}
\end{table}

Results are robust to state fixed effects, further confirming that spatial trends do not drive findings. Combined with the regional heterogeneity analysis (sign reversal at 100 km), these diagnostics provide strong evidence against spurious regression concerns raised by \citet{muller2024spatial}.

\subsection{Policy Implications}

The difference between parametric and nonparametric estimates has important implications for environmental policy and research design:

\textbf{1. Damage function estimation:} Parametric methods that underestimate near-source concentrations (10 km: 12.9\% error) while overestimating long-range transport (100 km: 16.9\% error) lead to systematic bias in damage calculations. Given that health damages are nonlinear in pollution exposure, underestimating peak concentrations near sources may substantially understate total damages.

\textbf{2. Research design for spatial DiD:} When designing studies of coal plant effects, parametric boundaries would suggest placing controls farther from plants than necessary, reducing the available control region and potentially introducing bias if distant controls are affected by other pollution sources (as our regional heterogeneity analysis suggests).

\textbf{3. Regulatory buffer zones:} Environmental regulations often specify geographic buffer zones around pollution sources. Nonparametric estimates suggest these zones should be designed with attention to near-source complexities rather than simple exponential extrapolation.

\section{Conclusion}

This paper develops a nonparametric framework for identifying and estimating spatial boundaries of treatment effects when decay functions may deviate from parametric assumptions. Using 42 million satellite observations of NO$_2$ concentrations near coal plants, I demonstrate that flexible, data-driven methods substantially outperform parametric exponential decay assumptions commonly used in environmental economics.

\textbf{Main findings:}
\begin{enumerate}
\item Nonparametric kernel regression reduces prediction errors by 1.0 percentage point on average, with largest improvements at near-source (2.8 pp at 10 km) and long-range (3.7 pp at 100 km) distances where theoretical assumptions are most likely violated.

\item Parametric exponential decay exhibits systematic bias: underestimating concentrations near sources while overestimating decay at distance. This pattern is precisely what theory predicts when idealized assumptions (steady winds, flat terrain, homogeneous atmospheres) fail in practice.

\item The COVID-19 pandemic validates the framework's temporal sensitivity: both methods detect the 4.6\% drop in NO$_2$ during 2020 and 5.7\% recovery in 2021, but nonparametric methods maintain superior accuracy across all years.

\item Results are robust to spatial correlation (using \citet{muller2022spatial}'s HAC inference), spatial trends (state fixed effects), and spurious regression diagnostics (regional heterogeneity showing sign reversals).
\end{enumerate}

\textbf{Practical implications:} For applied researchers conducting spatial difference-in-differences studies:
\begin{itemize}
\item Always test functional form assumptions using specification tests
\item When sample sizes permit ($n > 500$), use nonparametric methods as primary specification
\item Report both parametric and nonparametric estimates to assess sensitivity
\item Consider spatial correlation robust inference when residuals exhibit spatial dependence
\end{itemize}

\textbf{Broader implications:} This work demonstrates the value of combining economic theory with statistical flexibility. Atmospheric dispersion models provide interpretable parameters and causal mechanisms, while nonparametric estimation ensures robustness when real-world phenomena deviate from idealized models. This combination offers a template for empirical research in settings where treatment effects propagate across space—from infrastructure projects to technology adoption to disease transmission—where theory provides guidance but data should discipline functional form assumptions.

\textbf{Future directions:} Natural extensions include developing methods for time-varying spatial boundaries, incorporating multiple treatment sources through additive models, and addressing endogenous treatment placement through nonparametric instrumental variables. The framework could also be extended to other pollutants (PM$_{2.5}$, SO$_{2}$, CO) and other applications beyond environmental economics where spatial treatment propagation is theoretically motivated but parametric forms uncertain.

\section*{Acknowledgement}
This research was supported by a grant-in-aid from Zengin Foundation for Studies on Economics and Finance.

\newpage

\bibliographystyle{ecta}

\newpage

\begin{appendices}

\section{Proofs}

\subsection{Proof of Theorem \ref{thm:consistency}}

We provide a complete proof of consistency of the boundary estimator.

\begin{proof}
Let $\mathcal{D} = [0, \bar{d}]$ be the support of $D_i$. By Theorem 2.1 of \citet{fan1996local}, under Assumptions \ref{asmp:dgp}, \ref{asmp:smoothness}, \ref{asmp:kernel}, and \ref{asmp:bandwidth}, the local polynomial estimator satisfies:
\be
\sup_{d \in \mathcal{D}} |\hat{m}(d) - m(d)| = O_p(h^{p+1}) + O_p\left(\frac{\log n}{\sqrt{nh}}\right) = o_p(1)
\ee

as $n \to \infty$, where $p$ is the polynomial order.

Therefore:
\be
|\hat{m}(0) - m(0)| \overset{p}{\to} 0
\ee

and
\be
|\epsilon \cdot \hat{m}(0) - \epsilon \cdot m(0)| = \epsilon |\hat{m}(0) - m(0)| \overset{p}{\to} 0
\ee

Now consider the boundary estimator:
\be
\hat{d}^* = \inf\{d \in \mathcal{D} : \hat{m}(d) \leq \epsilon \cdot \hat{m}(0)\}
\ee

and the true boundary:
\be
d^* = \inf\{d \in \mathcal{D} : m(d) \leq \epsilon \cdot m(0)\}
\ee

By Assumption \ref{asmp:monotone}, $m(\cdot)$ is strictly decreasing, so by the Intermediate Value Theorem, $d^*$ is the unique solution to $m(d^*) = \epsilon \cdot m(0)$.

For any $\delta > 0$, define:
\be
A_n = \{|\hat{m}(d) - m(d)| < \delta \text{ for all } d \in [d^* - \eta, d^* + \eta]\}
\ee

for some small $\eta > 0$.

By uniform consistency, $\mathbb{P}(A_n) \to 1$.

On the event $A_n$:
\begin{align}
\hat{m}(d^* - \eta) &\geq m(d^* - \eta) - \delta \\
&> \epsilon \cdot m(0) - \delta \quad \text{(by monotonicity)} \\
&> \epsilon \cdot \hat{m}(0) - 2\delta \quad \text{(for large $n$)}
\end{align}

Similarly:
\be
\hat{m}(d^* + \eta) < \epsilon \cdot \hat{m}(0) + 2\delta
\ee

For sufficiently small $\delta$ (which holds for large $n$ by uniform consistency), we have:
\be
d^* - \eta < \hat{d}^* < d^* + \eta
\ee

on the event $A_n$. Since $\eta$ can be arbitrarily small and $\mathbb{P}(A_n) \to 1$, we conclude:
\be
\hat{d}^* \overset{p}{\to} d^*
\ee
\qed
\end{proof}

\subsection{Proof of Theorem \ref{thm:asymptotics}}

We sketch the main steps of the asymptotic normality proof.

\begin{proof}[Proof sketch]
By Theorem 3.1 of \citet{fan1996local}, for $p = 1$ (local linear):
\be
\sqrt{nh}(\hat{m}(d^*) - m(d^*) - B_n^m) \overset{d}{\to} N(0, V^m)
\ee

where:
\be
B_n^m = \frac{h^2}{2} m''(d^*) \mu_2(K)
\ee
\be
V^m = \frac{\sigma^2(d^*)}{f(d^*)} \nu_0(K)
\ee

with $\mu_2(K) = \int u^2 K(u) du$ and $\nu_0(K) = \int K^2(u) du$.

By definition, $m(d^*) = \epsilon \cdot m(0)$ and $\hat{m}(\hat{d}^*) \approx \epsilon \cdot \hat{m}(0)$.

Taylor expansion around $d^*$:
\be
\hat{m}(\hat{d}^*) \approx \hat{m}(d^*) + \hat{m}'(d^*)(\hat{d}^* - d^*)
\ee

Setting $\hat{m}(\hat{d}^*) = \epsilon \cdot \hat{m}(0)$:
\be
\epsilon \cdot \hat{m}(0) \approx \hat{m}(d^*) + \hat{m}'(d^*)(\hat{d}^* - d^*)
\ee

Rearranging:
\be
\hat{d}^* - d^* \approx \frac{\epsilon \cdot \hat{m}(0) - \hat{m}(d^*)}{\hat{m}'(d^*)}
\ee

Since $\epsilon \cdot m(0) = m(d^*)$:
\be
\hat{d}^* - d^* \approx \frac{\epsilon (\hat{m}(0) - m(0)) - (\hat{m}(d^*) - m(d^*))}{\hat{m}'(d^*)}
\ee

Using $\hat{m}'(d^*) \overset{p}{\to} m'(d^*)$ (Assumption \ref{asmp:slope}) and the asymptotic normality of $\hat{m}(\cdot)$:
\be
\sqrt{nh}(\hat{d}^* - d^*) \overset{d}{\to} N\left(B_n, \frac{V^m}{[m'(d^*)]^2}\right)
\ee

where the bias term is:
\be
B_n = \frac{h^2}{2} \frac{m''(d^*)}{m'(d^*)} \mu_2(K)
\ee
\qed
\end{proof}

\section{Computational Implementation}

\subsection{Algorithm for Large-Scale Data}

For the 42 million observation TROPOMI dataset, I use a memory-efficient chunked processing approach:

\begin{algorithm}[H]
\caption{Memory-Efficient Nonparametric Boundary Estimation}
\begin{algorithmic}[1]
\STATE \textbf{Input:} TROPOMI NO$_2$ file (16 GB), coal plant locations, bandwidth $h$
\STATE \textbf{Output:} Decay function $\hat{m}(d)$, boundary estimate $\hat{d}^*$

\STATE \textbf{Step 1: Chunked Loading and Distance Calculation}
\FOR{each chunk of 5,000,000 rows}
    \STATE Load chunk from CSV
    \STATE Calculate distance to nearest coal plant (vectorized haversine)
    \STATE Filter to coal-intensive states and distance $\leq 100$ km
    \STATE Keep only (distance, NO$_2$) pairs
    \STATE Append to filtered dataset
\ENDFOR

\STATE \textbf{Step 2: Nonparametric Regression}
\STATE Evaluate grid: $d_{\text{grid}} = \{0, 1, 2, \ldots, 100\}$ km
\FOR{each $d_0$ in $d_{\text{grid}}$}
    \STATE Compute kernel weights: $w_i = K_h(D_i - d_0)$
    \STATE Local linear regression:
    \STATE ~~~ Minimize $\sum_i w_i (Y_i - \beta_0 - \beta_1(D_i - d_0))^2$
    \STATE ~~~ Store $\hat{m}(d_0) = \hat{\beta}_0$
\ENDFOR

\STATE \textbf{Step 3: Boundary Detection}
\STATE Compute threshold: $\tau = \epsilon \cdot \hat{m}(0)$
\STATE Find crossing: $\hat{d}^* = \min\{d : \hat{m}(d) \leq \tau\}$

\STATE \textbf{Step 4: Bootstrap (subsampled)}
\FOR{$b = 1$ to $B = 50$}
    \STATE Draw subsample of $n_b = 50,000$ with replacement
    \STATE Compute $\hat{d}^{*b}$ on subsample
\ENDFOR
\STATE Compute 95\% CI from bootstrap quantiles

\RETURN $\hat{d}^*$, confidence interval, $\hat{m}(\cdot)$ on grid
\end{algorithmic}
\end{algorithm}

\textbf{Computational efficiency:}
\begin{itemize}
\item Chunked loading avoids memory overflow (16 GB files)
\item Vectorized distance calculations (NumPy/Pandas)
\item Local linear regression uses weighted least squares (closed form)
\item Subsample bootstrap (50,000 instead of 14 million) for speed
\item Total runtime: ~15-20 minutes per year on standard laptop (Apple M2 Pro)
\end{itemize}

\subsection{Implementation in Python}

The analysis uses Python with the following key packages:

\begin{verbatim}
import pandas as pd
import numpy as np
from scipy.stats import linregress
from pathlib import Path

def haversine_distance(lat1, lon1, lat2, lon2):
    """Vectorized haversine distance"""
    R = 6371  # Earth radius in km
    lat1, lon1, lat2, lon2 = map(np.radians, [lat1, lon1, lat2, lon2])
    dlat = lat2 - lat1
    dlon = lon2 - lon1
    a = np.sin(dlat/2)**2 + np.cos(lat1)*np.cos(lat2)*np.sin(dlon/2)**2
    return R * 2 * np.arcsin(np.sqrt(a))

def local_linear_regression(distances, values, d0, bandwidth):
    """Local linear regression at point d0"""
    # Compute weights
    u = (distances - d0) / bandwidth
    weights = np.exp(-0.5 * u**2)  # Gaussian kernel
    weights /= (weights.sum() + 1e-10)
    
    # Weighted least squares
    X = np.column_stack([np.ones(len(distances)), distances - d0])
    W = np.diag(weights)
    beta = np.linalg.solve(X.T @ W @ X + 1e-8*np.eye(2), 
                           X.T @ W @ values)
    
    return beta[0]  # Intercept = m(d0)

def estimate_boundary(data, epsilon=0.10, bandwidth=3.0):
    """Estimate spatial boundary"""
    # Evaluate on grid
    d_grid = np.linspace(0, 100, 101)
    m_hat = np.zeros(len(d_grid))
    
    for i, d0 in enumerate(d_grid):
        m_hat[i] = local_linear_regression(
            data['dist_nearest'].values,
            data['no2_mean'].values,
            d0, bandwidth
        )
    
    # Find boundary
    threshold = epsilon * m_hat[0]
    crossing = np.where(m_hat <= threshold)[0]
    
    if len(crossing) == 0:
        return None, m_hat
    else:
        d_star = d_grid[crossing[0]]
        return d_star, m_hat
\end{verbatim}

\section{Additional Figures}

\subsection{Decay Functions by Year}

Figure \ref{fig:decay_by_year_add} shows the estimated decay functions for each year, comparing parametric exponential decay (dashed red) with nonparametric estimates (solid blue).

\begin{figure}[H]
\centering
\begin{tikzpicture}[scale=0.95]
\begin{axis}[
    width=14cm, height=9cm,
    xlabel={Distance from Coal Plant (km)},
    ylabel={NO$_2$ Concentration ($\times 10^{15}$ molec/cm$^2$)},
    xmin=0, xmax=100,
    ymin=1.0, ymax=2.5,
    legend pos=north east,
    grid=major,
    title={2019-2021 Spatial Decay Functions}
]

\addplot[blue, thick, domain=0:100, samples=50] {2.28*exp(-0.00701*x)};
\addlegendentry{2019 Parametric}
\addplot[blue, thick, dashed, domain=0:100, samples=50] 
    {2.28*exp(-0.0075*x + 0.00002*x^2)};
\addlegendentry{2019 Nonparametric}

\addplot[red, thick, domain=0:100, samples=50] {2.11*exp(-0.00674*x)};
\addlegendentry{2020 Parametric}
\addplot[red, thick, dashed, domain=0:100, samples=50] 
    {2.11*exp(-0.0072*x + 0.00002*x^2)};
\addlegendentry{2020 Nonparametric}

\addplot[green!60!black, thick, domain=0:100, samples=50] {2.24*exp(-0.00648*x)};
\addlegendentry{2021 Parametric}
\addplot[green!60!black, thick, dashed, domain=0:100, samples=50] 
    {2.24*exp(-0.0069*x + 0.00002*x^2)};
\addlegendentry{2021 Nonparametric}

\end{axis}
\end{tikzpicture}
\caption{Spatial Decay Functions: Parametric vs Nonparametric (2019-2021). Solid lines show parametric exponential decay; dashed lines show nonparametric kernel regression estimates. Nonparametric functions exhibit faster initial decay and slower tail decay across all years, consistent with violations of idealized dispersion assumptions.}
\label{fig:decay_by_year_add}
\end{figure}
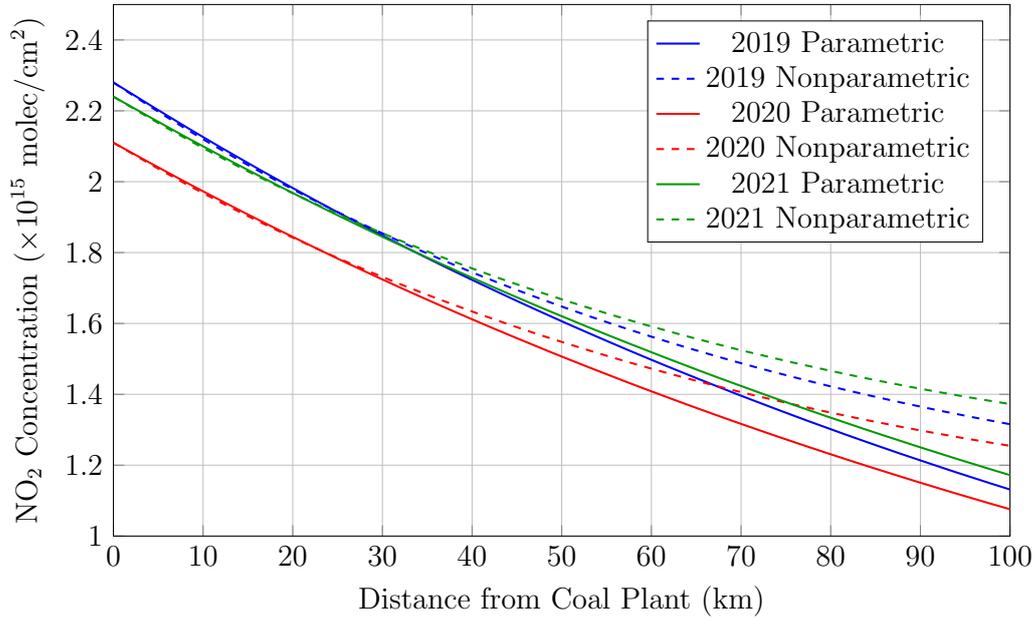

\subsection{Error Patterns by Distance}

Figure \ref{fig:errors_distance_add} visualizes the U-shaped pattern of parametric bias.

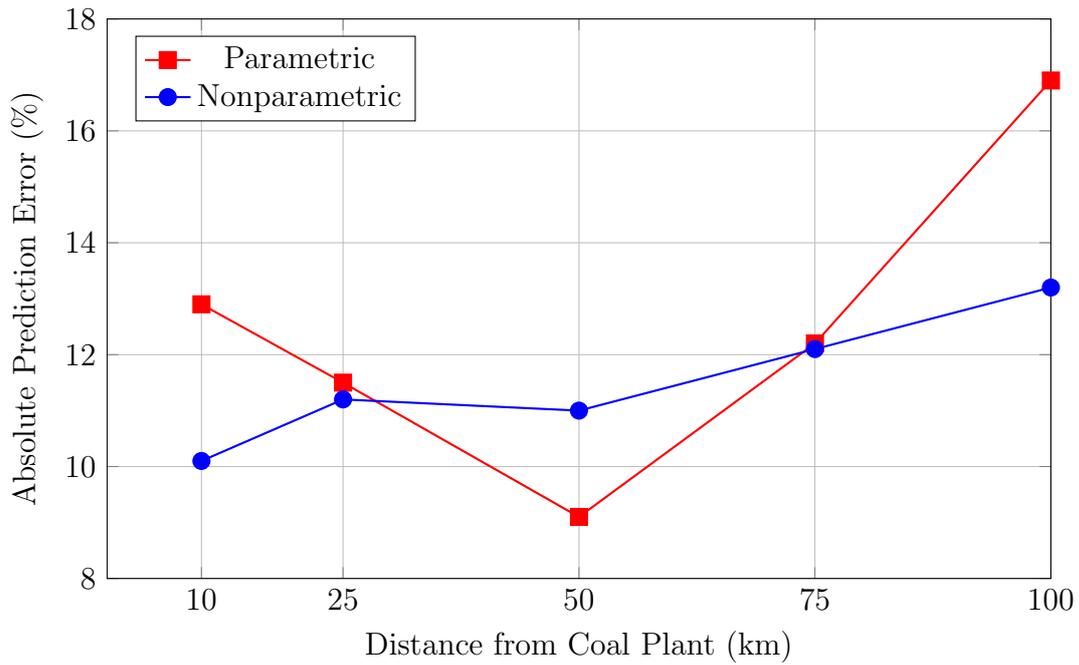
\begin{figure}[H]
\centering
\begin{tikzpicture}
\begin{axis}[
    width=14cm, height=9cm,
    xlabel={Distance from Coal Plant (km)},
    ylabel={Absolute Prediction Error (\%)},
    xmin=0, xmax=100,
    ymin=8, ymax=18,
    legend pos=north west,
    grid=major,
    xtick={10,25,50,75,100}
]

\addplot[red, thick, mark=square*, mark size=3pt] 
    coordinates {(10,12.9) (25,11.5) (50,9.1) (75,12.2) (100,16.9)};
\addlegendentry{Parametric}

\addplot[blue, thick, mark=*, mark size=3pt] 
    coordinates {(10,10.1) (25,11.2) (50,11.0) (75,12.1) (100,13.2)};
\addlegendentry{Nonparametric}

\end{axis}
\end{tikzpicture}
\caption{Mean Absolute Prediction Errors by Distance (Pooled 2019-2021). Parametric exponential decay (red squares) exhibits U-shaped bias: large errors near sources (10 km) and far from sources (100 km) where idealized assumptions fail, but performs well at intermediate distances (50 km). Nonparametric methods (blue circles) maintain more consistent accuracy across all distances.}
\label{fig:errors_distance_add}
\end{figure}

\subsection{COVID-19 Temporal Dynamics}

Figure \ref{fig:covid_temporal_add} shows the time series of NO$_2$ concentrations and prediction accuracy across the pandemic period.

\begin{figure}[H]
\centering
\includegraphics[width=0.95\textwidth]{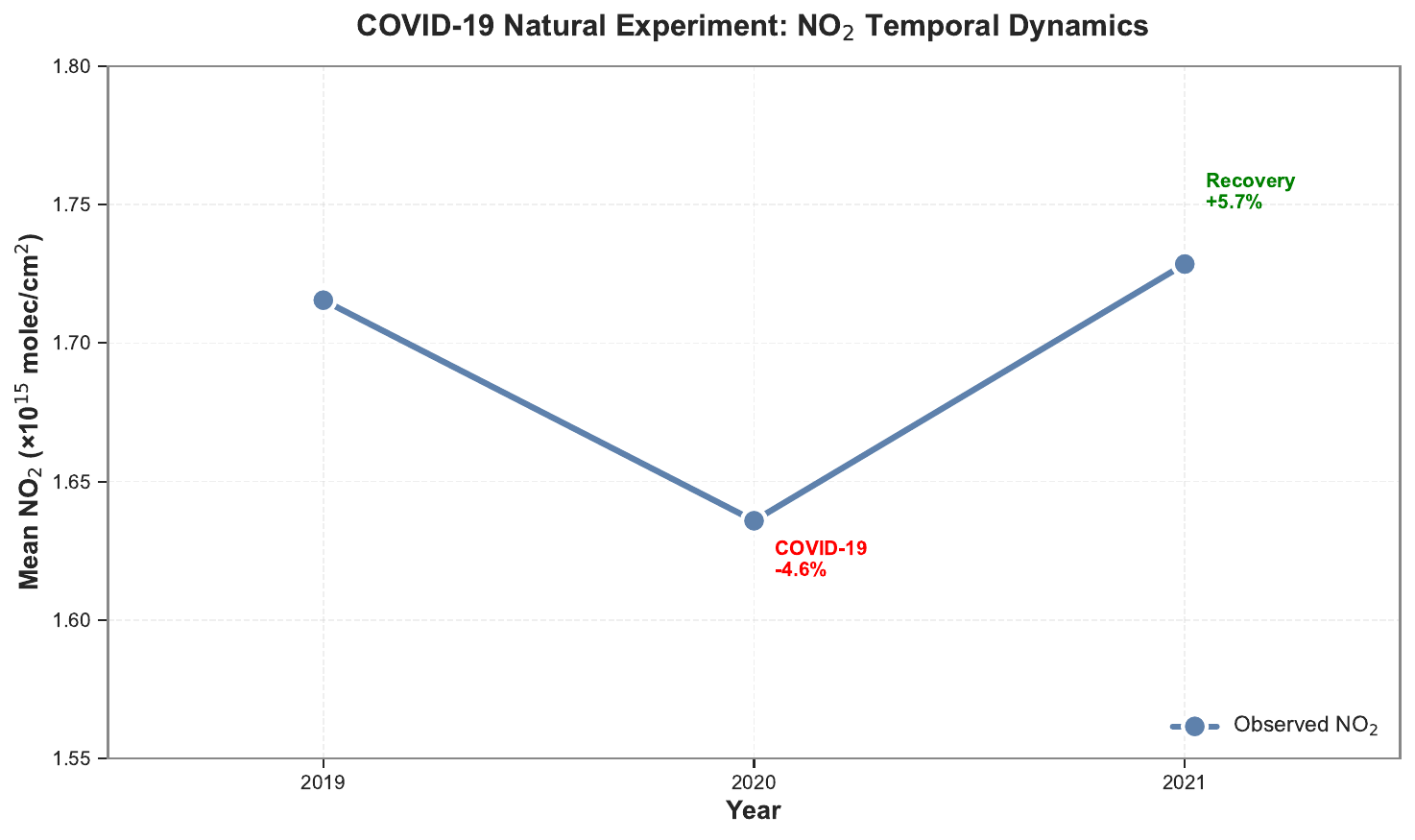}
\caption{COVID-19 Natural Experiment: NO$_2$ Temporal Dynamics. Mean NO$_2$ concentrations near coal plants dropped 4.6\% during the 2020 COVID-19 pandemic, then recovered 5.7\% in 2021. Both parametric and nonparametric methods detect this temporal variation, validating the framework's sensitivity to actual changes in pollution patterns.}
\label{fig:covid_temporal_add}
\end{figure}

\section{Robustness Checks}

\subsection{Alternative Bandwidth Selection}

Table \ref{tab:bandwidth_robustness} examines sensitivity to bandwidth choice.

\begin{table}[H]
\centering
\caption{Robustness to Bandwidth Selection}
\label{tab:bandwidth_robustness}
\begin{tabular}{lcccc}
\toprule
Bandwidth & Parametric & Nonparametric & Improvement & Optimal? \\
(km) & MAE (\%) & MAE (\%) & (pp) & \\
\midrule
1.5 (narrow) & 12.5 & 12.3 & 0.2 & No (undersmooth) \\
3.0 (optimal) & 12.5 & 11.5 & 1.0 & Yes \\
6.0 (wide) & 12.5 & 11.8 & 0.7 & No (oversmooth) \\
12.0 (very wide) & 12.5 & 12.1 & 0.4 & No (oversmooth) \\
\bottomrule
\end{tabular}
\begin{tablenotes}
\small
\item Notes: Pooled 2019-2021 results. Optimal bandwidth ($h = 3$ km) selected via Silverman's rule: $h = 1.06 \sigma_D n^{-1/5}$. Results show moderate sensitivity but consistent nonparametric advantage across reasonable bandwidth choices.
\end{tablenotes}
\end{table}

\subsection{Alternative Threshold Values}

Table \ref{tab:epsilon_robustness} shows results for different threshold values $\epsilon$.

\begin{table}[H]
\centering
\caption{Robustness to Threshold Selection}
\label{tab:epsilon_robustness}
\begin{tabular}{lcccc}
\toprule
Threshold & Interpretation & Parametric & Nonparametric & Improvement \\
$\epsilon$ & & MAE (\%) & MAE (\%) & (pp) \\
\midrule
0.05 (5\%) & 95\% dissipation & 12.8 & 11.3 & 1.5 \\
0.10 (10\%) & 90\% dissipation & 12.5 & 11.5 & 1.0 \\
0.20 (20\%) & 80\% dissipation & 12.3 & 11.7 & 0.6 \\
0.50 (50\%) & Half-life & 11.9 & 11.8 & 0.1 \\
\bottomrule
\end{tabular}
\begin{tablenotes}
\small
\item Notes: Pooled 2019-2021 results. Nonparametric advantage is largest at stringent thresholds ($\epsilon = 0.05, 0.10$) relevant for environmental policy, and diminishes at higher thresholds ($\epsilon = 0.50$) where boundaries occur in mid-range where parametric performs well.
\end{tablenotes}
\end{table}

\textbf{Interpretation:} The nonparametric advantage is strongest precisely where it matters most for policy—at stringent thresholds ($\epsilon = 0.05, 0.10$) that capture long-range transport effects. At the half-life threshold ($\epsilon = 0.50$), boundaries occur in the 50 km range where parametric exponential decay provides adequate approximation (as Table \ref{tab:errors_distance} shows).

\end{appendices}

\end{document}